\documentclass[journal]{IEEEtran}
\usepackage{epsfig,graphics,subfigure,psfrag,amsmath,cases}
\usepackage{latexsym,amssymb,amsmath,empheq,algorithm}
\usepackage{graphicx}
\usepackage{epstopdf}
\usepackage{algorithmic}
\usepackage{color}
\usepackage{url}
\usepackage{scrtime}
\usepackage{amsthm}
\usepackage{datetime}
\usepackage{bm}
\usepackage{cite}
\usepackage{cleveref}
\usepackage{setspace}
\allowdisplaybreaks[4]

\title{Energy Efficiency Analysis of IRS-aided Wireless Communication Systems Under Statistical QoS Constraints: An Information-Theoretic Perspective}
\author{Wenhao Wang\thanks{W. Wang and D. Qiao are with the School of Communication and Electronic Engineering, East China Normal University, Shanghai, China (e-mail: whwang@cee.ecnu.edu.cn, dlqiao@ce.ecnu.edu.cn).}, Deli Qiao, Lei Yang, Yueying Zhan\thanks{L. Yang and Y. Zhan are with the Key Laboratory of Space Utilization, Technology and Engineering Center for Space Utilization, Chinese Academy of Sciences, Beijing, China (e-mail: yang.lei, zhanyueying@csu.ac.cn).}, and Derrick Wing Kwan Ng,~\IEEEmembership{Fellow,~IEEE}\thanks{D. W. K. Ng is with the School of Electrical Engineering and Telecommunications, University of New South Wales, Sydney, NSW 2052, Australia (e-mail: w.k.ng@unsw.edu.au).}\vspace{-0.5em}}

%\date{\ampmtime,\,\today}

\newtheorem{Thm}{Theorem}
\newtheorem{Lem}{Lemma}
\newtheorem{Cor}{Corollary}

\newtheorem{T-Prob}{Transformed Problem}

\newtheorem{Remark}{Remark}

%\newcommand{\qed}{\hfill \ensuremath{\Box}}
%%%%%%%%%%%%%%%%%%%%%%%%%%%%%%%%%%%%%%%%%%%%%%%%%%%%%%%
%% Math operators

%%%%%%%%%%%%%%%%%%%%%%%%%%%%%%%%%%%%
%\doublespacing

%\textheight 9.86in
\begin{document}

\maketitle

\begin{abstract}
  This paper investigates the information-theoretic energy efficiency of intelligent reflecting surface (IRS)-aided wireless communication systems, taking into account the statistical quality-of-service (QoS) constraints on delay violation probabilities. Specifically, effective capacity is adopted to capture the maximum constant arrival rate that can be supported by a time-varying service process while fulfilling these statistical QoS requirements. We derive the minimum bit energy required for the IRS-aided wireless communication system under QoS constraints and analyze the spectral efficiency and energy efficiency tradeoff at low but nonzero signal-to-noise ratio (SNR) levels by also characterizing the wideband slope values. Our analysis demonstrates that the energy efficiency for the considered system under statistical QoS constraints can approach that for a system without QoS limitations in the low-SNR regime. Additionally, deploying a sufficiently large number of practical IRS reflecting elements can substantially reduce energy consumption required to achieve desired spectral efficiency performance in the low-power regime, even with limited bit-resolution phase shifters. Besides, we reveal that compared with the results applied to the low-power regime, higher effective capacity performance can be achieved in scenarios with sparse multipath fading while achieving the same minimum bit energy in the wideband regime.
\end{abstract}

\begin{IEEEkeywords}
  Intelligent reflecting surface, energy efficiency, minimum bit energy, wideband slope, effective capacity.
\end{IEEEkeywords}

\section{Introduction}

There are two fundamental considerations for designing future wireless communication systems: the scarcity of energy resources and the random variability of propagation channels \cite{Mingzhe}. On the one hand, wireless communication systems are expected to provide high data rates and reliable transmissions at low energy costs, leading to extensive research focused on energy-efficient system designs over the past few decades, e.g., \cite{Wesemann,Zhihan,Khowaja}. On the other hand, guaranteeing deterministic quality-of-service (QoS) proves highly challenging due to the time-varying and stochastic natures of wireless channels caused by mobility, multipath fading, and changing environments \cite{Abbou}. In general, improving energy efficiency while satisfying statistical QoS constraints is of vital importance for achieving optimal system performance and quality \cite{Changyang}.

Recently, intelligent reflecting surface (IRS) has been extensively recognized as a promising technique to improve system energy efficiency by flexibly establishing a favorable communication environment \cite{Wu2}. In particular, an IRS equipped with various reconfigurable and low-cost passive elements that can reflect desired signals by appropriately altering controllable phase shifts \cite{Gong}. By properly designing the IRS phase shift matrix, desired signals transmitted in the direct path can be constructively combined with the reflected signals to achieve substantial power gains at the targeted receivers \cite{Wu}. In fact, there have been numerous studies investigating energy-efficient designs of IRS-aided wireless communication systems, e.g., \cite{George,Chenwu,Rose}. However, these works only capture performance of classical Shannon capacity in the physical layer assuming infinite backlog, overlooking statistical QoS limitations attributed to queue length violation and buffer overflow probabilities in the data-link layer \cite{Negi}. As a result, these studies are not applicable to emerging wireless communication systems with delay-sensitive applications \cite{Amiri}, such as automatic driving, extended reality, and factory automation, etc.

It is widely recognized that effective capacity \cite{Negi} serves as an efficient metric to measure cross-layer performance in wireless communication systems, particularly under the influence of statistical QoS constraints. Specifically, effective capacity captures the maximum constant arrival rate that a time-varying service process can sustain while fulfilling a statistical QoS requirement, i.e., buffer overflow probability or delay violation probability \cite{Hsiao}. It is noteworthy that there have been a few literatures studying the energy-efficient IRS-aided wireless communication systems by exploiting effective capacity formulation, e.g., \cite{Basharat,Vincent,Moosavi}. Nonetheless, these researches merely focus on the effective rate-per-Joule as an energy efficiency metric that cannot provide the fundamental performance limits and identify the most efficient utilization of scarce energy resources \cite{Verdu}.

From an information-theoretic perspective, the energy required for reliably transmitting one bit of information, commonly referred to as \emph{bit energy}, has been extensively adopted to evaluate the energy efficiency performances of wireless communication systems \cite{Verdu,Mustafa,Gursoy}. In particular, when a system operates at low signal-to-noise ratio (SNR) levels, e.g., in the low-power or wideband regimes, minimizing bit energy is equivalent to maximizing energy efficiency \cite{Verdu,Mustafa,Gursoy}. Based on the theory, there have been numerous literatures evaluating the energy-efficient wireless communication systems under statistical QoS limitations. For example, the authors analyzed the energy efficiencies at low-SNR levels for simple-input simple-output (SISO) \cite{Mustafa} and multi-input multi-output (MIMO) \cite{Gursoy} wireless communication systems under statistical QoS constraints. Also, the tradeoff between normalized effective capacity and bit energy in the low-power or wideband regimes by characterizing the wideband slope values was investigated in \cite{Qiao}, which unveiled the impact of statistical QoS constraints on energy efficiency performance. These analytical insights facilitate the determination of the energy required to achieve specific spectral efficiency in the existence of statistical QoS limitations, which is desired for designing practical IRS-aided wireless communication systems, whereas the relevant research is still in its infancy. Additionally, it remains unclear whether IRS implementation is beneficial for significantly improving the system performance without incurring excessive energy costs, particularly due to potential spectral efficiency degradation caused by strict QoS constraints. Hence, there is a crucial need to study the tradeoff between spectral efficiency and energy efficiency for IRS-aided wireless communication systems under statistical QoS constraints in the low-power or wideband regimes.

Motivated by the above observations, we analyze the energy efficiency performance of an IRS-aided wireless communication system taking into account statistical QoS constraints from the information-theoretic perspective. The main contributions of our paper are summarized below:

\begin{itemize}
\item We derive the expression for the minimum bit energy in the low-power and wideband regimes for the IRS-aided wireless communication system, considering the existence of statistical QoS constraints. Also, we demonstrate that the energy efficiency for the considered system can approach that of a system without statistical QoS limitations at low but nonzero SNR levels.
\item To shed light on the effectiveness of IRS in optimizing energy efficiency, we further approximate the results acquired in the low-power regime under specific scenarios where a sufficiently large number of IRS reflecting elements are implemented with continuous and discrete phase shifts. Moreover, we demonstrate that deploying a sufficiently large number of the practical IRS reflecting elements can substantially reduce the energy consumption need to achieve the required spectral efficiency performance at low but nonzero SNR levels, even with limited bit-resolution phase shifters.
\item We reveal that the results acquired in the low-power regime are also applicable to scenarios with rich multipath fading in the wideband regime. Furthermore, under the assumption of employing an infinite large number of IRS reflecting elements, compared to the low-power regime, higher effective capacity performance can be achieved in scenarios characterized by sparse multipath fading where the number of the independent resolvable subchannels is bounded while satisfying the same minimum bit energy requirement in the wideband regime.
\item We prove that the expression for the minimum bit energy in scenarios with rich multipath fading is identical to that for scenarios with sparse multipath fading, where the number of subchannels grows sublinearly with the increasing bandwidth. Besides, we unveil that the energy efficiency performance of the considered system without the multipath sparsity can approach that for a system in sparse multipath fading without QoS limitations.
\end{itemize}

We organize the rest of the paper as follows. Specifically, we first introduce preliminaries including system model, effective capacity, and spectral efficiency-energy efficiency tradeoff in Section II. Then, Sections III and IV present the energy efficiency in the low-power and wideband regimes, respectively. In Sections V and VI, we present the simulation results that valid our theoretical findings and conclude the paper.

\emph{Notations:} $\Gamma \left( x \right) = \int \nolimits_{0}^{\infty} t^{x - 1} {\mathrm{exp}} ( {- t } ) {\mathrm{d}} t$ denotes the Gamma function for any $x > 0$. ${\mathbb{C}^{m \times n}}$ represents the set of an $m \times n$ complex-valued matrix, while ${\mathbb{R}}^+$ indicates a positive real-valued scalar. $\jmath = \sqrt { - 1} $ denotes an imaginary unit. $\mathbb{E}\left\{ \cdot \right\}$, $\mathrm{Pr}\left\{ \cdot \right\}$, ${\mathrm{exp}} \left(  \cdot  \right)$, $\left|  \cdot  \right|$, ${\left(  \cdot  \right)^{-1}}$, ${\left(  \cdot  \right)^T}$, ${\left(  \cdot  \right)^H}$, and ${\left(  \cdot  \right)^ * }$ denote statistical expectation, occurrence probability of an event, exponential, absolute value, inverse, transpose, conjugate transpose, and conjugate of a input, respectively. $y = o ( x )$ means $\lim \nolimits_{y \rightarrow 0, x \rightarrow 0} y / x = 0$. ${\mathrm{diag}}\left( {\mathbf{x}} \right)$ represents a diagonal matrix whose main diagonal elements are the associated elements in vector ${\mathbf{x}}$. $ \triangleq $ and $ \sim $ stand for ``defined as'' and ``distributed as'', respectively. $\mathrm{Nakagami} ( \alpha,\beta )$ and $\mathrm{Gamma} ( \alpha,\beta )$ denote the Nakagami and Gamma distributions, respectively, where $\alpha$ and $\beta$ are the corresponding shape and scaling parameters.

\section{Preliminaries}

\subsection{System Model}

We consider an IRS-aided wireless communication system composed of one transmitter, one receiver, and an IRS. We assume that both the transmitter and the receiver are single-antenna devices, while the IRS is equipped with $N$ reflecting elements. Denote ${\mathcal{N}}$ as the set of reflecting elements and ${\mathbf{\Theta }} = {\mathrm{diag}}\big( {{\mathrm{exp}} \big({\jmath{\theta _1}} \big), \ldots, {\mathrm{exp}} \big({\jmath{\theta _n}} \big), \ldots, {\mathrm{exp}} \big({\jmath{\theta _N}} \big)} \big) \in {\mathbb{C}^{N \times N}}$ as the reflection-coefficients matrix of the IRS, where $\theta_n \in \left[ {0,2\pi } \right)$ is the phase shift of the $n$-th reflecting element, $\forall n \in {\mathcal{N}}$. We consider that all channels of the considered system follow as quasi-static flat block fading. Also, perfect channel state information (CSI) is available at the transmitter and the receiver \cite{Zheng,Cui,Beixiong}. The received signal at the receiver is given by
\setlength\abovedisplayskip{4pt}
\setlength\belowdisplayskip{4pt}
\begin{align}
  y = \big( \sqrt{\ell_{\mathrm{f}} \ell_{\mathrm{g}}} {\mathbf{f}}^T {\mathbf{\Theta }} {\mathbf{g}} + \sqrt{\ell_{\mathrm{h}}} h \big)  x + z,
\end{align}
where ${\mathbf{g}} \in {\mathbb{C}^{N \times 1}}$, ${\mathbf{f}} \in {\mathbb{C}^{N \times 1}}$, and $h \in {\mathbb{C}}$ are small-scale fading coefficients of the transmitter-IRS, IRS-receiver, and direct transmitter-receiver links, respectively, while $\ell_{\mathrm{g}} \in {\mathbb{R}}^+$, $\ell_{\mathrm{f}} \in {\mathbb{R}}^+$, and $\ell_{\mathrm{h}} \in {\mathbb{R}}^+$ denote the corresponding distance-dependent path-losses, respectively. Specifically, we assume that all the involved links experience Nakagami-$m$ fading with the corresponding shape parameters $m_{\mathrm{g}}$, $m_{\mathrm{f}}$, and $m_h$, i.e., $\big| g_n \big| \sim \mathrm{Nakagami} \big( m_{\mathrm{g}},1 \big)$, $\big| f_n \big| \sim \mathrm{Nakagami} \big( m_{\mathrm{f}},1 \big)$, and $\big| h \big| \sim \mathrm{Nakagami} \big( m_{\mathrm{h}},1 \big)$, respectively, where $f_n$ and $g_n$ are the $n$-th element of small-scale fading coefficient vectors ${\mathbf{f}}$ and ${\mathbf{g}}$, respectively, $\forall n \in {\mathcal{N}}$. Furthermore, $x \in {\mathbb{C}}$ denotes the transmitted signal with the following average energy constraint, i.e., $\mathbb{E}\big\{ | x |^2 \big\} \leqslant P / B$, where $B$ and $P$ denote the system bandwidth and the upper bound for average transmit power budget, respectively. Besides, $z$ indicates the additive white Gaussian noise (AWGN) at the receiver with $\mathbb{E}\big\{ | z |^2 \big\} = N_0$, where $N_0$ denotes the corresponding noise power. As such, the average transmit SNR for the considered system is expressed as ${\mathrm{SNR}} = P/(N_0 B)$.

\subsection{Effective Capacity}

We introduce a crucial concept of effective capacity \cite{Negi} that provides the maximum constant arrival rate supported by a given time-varying service process while fulfilling a statistical QoS requirement specified by a QoS exponent $\mu$, which is defined as \cite{Negi}
\begin{align}
  \label{mu}
  \lim\limits_{Q_{\mathrm{max}} \rightarrow \infty} \frac{{\mathrm{ln}} \Big( {\mathrm{Pr}} \big( Q \geqslant Q_{\mathrm{max}} \big) \Big)}{Q_{\mathrm{max}}} = - \mu,
\end{align}
where $Q$ and $Q_{\mathrm{max}}$ denote the stationary queue length and the upper bound of queue length, respectively. Based on the above definition, we have the following approximation for the queue-overflow probability, i.e., ${\mathrm{Pr}} \big( Q \geqslant Q_{\mathrm{max}} \big) \approx {\mathrm{exp}} \big( {- \mu Q_{\mathrm{max}}} \big)$. Similarly, by denoting the steady-state delay in the buffer and the upper bound of communication delay as $D$ and $D_{\mathrm{max}}$, respectively, the delay violation probability can be approximated as ${\mathrm{Pr}} \big( D \geqslant D_{\mathrm{max}} \big) \approx {\mathrm{exp}} \big( {- \mu \delta D_{\mathrm{max}}} \big)$, where $\delta$ is related to the arrival and service processes \cite{Senem}. Note that $\mu \rightarrow \infty$ indicates extremely stringent QoS constraints while $\mu \rightarrow 0$ relaxes QoS limitations. As such, effective capacity is regarded as the maximum throughput under the queue length constraint, i.e., ${\mathrm{Pr}} \big( Q \geqslant Q_{\mathrm{max}} \big) \leqslant {\mathrm{exp}} \big( {- \mu Q_{\mathrm{max}}} \big)$ for $Q_{\mathrm{max}}$, or the delay constraint, i.e., ${\mathrm{Pr}} \big( D \geqslant D_{\mathrm{max}} \big) \leqslant {\mathrm{exp}} \big( {- \mu \delta D_{\mathrm{max}}} \big)$ for $D_{\mathrm{max}}$ \cite{Cenk}. Specifically, the effective capacity is denoted as \cite{Velipasalar}
\begin{align}
  C_{\mathrm{E}} = - \lim\limits_{t \rightarrow \infty} \frac{1}{\mu t} {\mathrm{ln}} \Big( \mathbb{E} \Big\{ {\mathrm{exp}} \big( {- \mu S \big[ t \big]} \big) \Big\} \Big),
\end{align}
where $S \big[ t \big] = \sum\nolimits_{i=1}^t r\big[ i \big]$ indicates the time-accumulated process, where $\big\{r\big[ i \big], i = 1, 2, \ldots \big\}$ represents the discrete-time stationary
and ergodic stochastic service process. Based on the block fading assumption, the corresponding channel realizations remain constant over $T$ durations in one block and vary independently from one block to another. As such, the effective capacity (bit/s) can be simplified to
\begin{align}
  C_{\mathrm{E}} = - \frac{1}{\mu T} {\mathrm{ln}} \Big( \mathbb{E} \Big\{ {\mathrm{exp}} \big( {- \mu T R} \big) \Big\} \Big),
\end{align}
where $R$ denotes the instantaneous service rate. For a given QoS exponent $\mu$, the effective capacity (bit/s/Hz) normalized by the system bandwidth, i.e., $B$, for the considered system is expressed as \eqref{C_E} at the top of next page,
\begin{figure*}[ht]
\begin{align}
  \label{C_E}
  C_{\mathrm{E}} \big( {\mathrm{SNR}}, \mu, {\mathbf{\Theta }} \big) = - \frac{1}{\mu T B} {\mathrm{ln}} \Big( \mathbb{E} \Big\{ {\mathrm{exp}} \Big( {- \mu T B \mathop {\mathrm{max}}\limits_{{\mathbf{\Theta}}, |\theta_m|=1} {\mathrm{log}}_2 \Big( 1 + {\mathrm{SNR}} \big| \bar h \big|^2 \Big)}  \Big) \Big\} \Big).
\end{align}\hrulefill\vspace*{-4mm}
\end{figure*}
where $\bar h \in {\mathbb{C}}$ is defined as $\bar h = \sqrt{\ell_{\mathrm{f}} \ell_{\mathrm{g}}} {\mathbf{f}}^T {\mathbf{\Theta }} {\mathbf{g}} + \sqrt{\ell_{\mathrm{h}}} h$. Here, we acquire the optimal solution to the above problem in \eqref{C_E} via adopting the following lemma.
\begin{Lem}
  \label{optimal_theta}
  For a given transmit SNR, the $n$-th optimal phase shift adopted at the IRS is expressed as $\theta_n^{\star} = \arg\big( h \big) - \arg\big( f_n \big) - \arg\big( g_n \big)$, where $f_n$ and $g_n$ are the $n$-th element of channel vectors ${\mathbf{f}}$ and ${\mathbf{g}}$, respectively, $\forall n \in {\mathcal{N}}$.
\end{Lem}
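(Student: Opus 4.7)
The plan is to reduce the lemma to a pointwise (per channel realization) maximization of $|\bar h|^2$, and then to solve that by a triangle-inequality alignment argument. Observe first that in \eqref{C_E} the maximization over $\mathbf{\Theta}$ sits inside the expectation and acts pointwise on each channel realization $(h,\mathbf{f},\mathbf{g})$. Since the outer transformation $x \mapsto -\frac{1}{\mu T B}\ln\mathbb{E}\{\exp(-\mu T B\, x)\}$ is strictly increasing in $x$, and since $t\mapsto \log_2(1+\mathrm{SNR}\,t)$ is strictly increasing in $t\ge 0$, maximizing the effective capacity reduces to maximizing $|\bar h|^2$, and hence $|\bar h|$, over the feasible phase-shift set $\{\theta_n\in[0,2\pi)\}_{n\in\mathcal N}$ for each realization separately.

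Next, I would expand the channel explicitly as
\begin{align}
\bar h \;=\; \sqrt{\ell_{\mathrm h}}\,h \;+\; \sqrt{\ell_{\mathrm f}\ell_{\mathrm g}}\sum_{n\in\mathcal N} f_n g_n\,{\mathrm{exp}}(\jmath\theta_n),
\end{align}
so that $\bar h$ is a sum of $N+1$ complex scalars whose magnitudes $\sqrt{\ell_{\mathrm h}}|h|$ and $\sqrt{\ell_{\mathrm f}\ell_{\mathrm g}}|f_n||g_n|$ do not depend on $\{\theta_n\}$, while their arguments are $\arg(h)$ and $\arg(f_n)+\arg(g_n)+\theta_n$, respectively. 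By the triangle inequality,
\begin{align}
|\bar h|\;\le\;\sqrt{\ell_{\mathrm h}}\,|h|+\sqrt{\ell_{\mathrm f}\ell_{\mathrm g}}\sum_{n\in\mathcal N}|f_n||g_n|,
\end{align}
with equality if and only if all $N+1$ terms share a common argument.

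I would then verify that the choice $\theta_n^{\star}=\arg(h)-\arg(f_n)-\arg(g_n)$ achieves this equality: substituting gives $f_n g_n\,{\mathrm{exp}}(\jmath\theta_n^{\star})=|f_n||g_n|\,{\mathrm{exp}}(\jmath\arg(h))$, which is collinear with $\sqrt{\ell_{\mathrm h}}h=\sqrt{\ell_{\mathrm h}}|h|\,{\mathrm{exp}}(\jmath\arg(h))$, so all summands add constructively and the bound is met. Since $|\theta_n|=1$ is automatically satisfied (it is a unit-modulus phase), $\theta_n^{\star}$ lies in the feasible set and is therefore optimal for each realization, hence optimal inside the expectation in \eqref{C_E}.

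I do not anticipate any real obstacle; the only subtlety worth flagging is that the optimization is pointwise in the channel (the same realization of $h$, $f_n$, $g_n$ feeds both the phase design and the resulting $|\bar h|^2$), which is why the optimal $\theta_n^{\star}$ depends on the instantaneous CSI and is consistent with the perfect-CSI assumption stated above the lemma. Uniqueness holds only modulo an additive $2\pi$ shift, but this is immaterial for the value of $|\bar h|^2$ and can be noted briefly.
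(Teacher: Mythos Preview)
Your proposal is correct and follows essentially the same approach as the paper: reduce to pointwise maximization of $|\bar h|$ and apply a triangle-inequality alignment argument to force all summands to share the argument $\arg(h)$. The only cosmetic difference is that the paper first splits $|\bar h|\le \sqrt{\ell_{\mathrm f}\ell_{\mathrm g}}|\mathbf f^T\mathbf\Theta\mathbf g|+\sqrt{\ell_{\mathrm h}}|h|$ and then reformulates the reflected term as $\mathbf v^H\mathrm{diag}(\mathbf f^T)\mathbf g$ to extract the optimal $\mathbf v^\star$, whereas you apply the triangle inequality directly to all $N+1$ summands in one step; your route is slightly more direct but the underlying idea is identical.
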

\begin{proof}
  Please refer to Appendix A for details.
\end{proof}

By substituting the optimal solution obtained in \textbf{Lemma \ref{optimal_theta}} into $\bar h$ in \eqref{C_E}, we have $\big| \bar h \big|^2 = \big( \sqrt{\ell_{\mathrm{h}}} \xi_{\mathrm{d}} + \sqrt{\ell_{\mathrm{f}} \ell_{\mathrm{g}}} \xi_{\mathrm{r}} \big)^2 \triangleq \xi$, where $\xi_{\mathrm{d}} = \big| h \big|$ and $\xi_{\mathrm{r}} = \sum\nolimits_{n\in {\mathcal{N}}} \big| f_n \big| \big| g_n \big|$. Hence, the normalized effective capacity (bit/s/Hz) in \eqref{C_E} can be rewritten as
\begin{align}
  \label{normalized_C_E}
  &C_{\mathrm{E}} \big( {\mathrm{SNR}}, \mu \big) \notag\\
  & = - \frac{1}{\mu T B} {\mathrm{ln}} \Big( \mathbb{E} \Big\{ {\mathrm{exp}} \Big( {- \mu T B {\mathrm{log}}_2 \big( 1 + {\mathrm{SNR}} \xi \big)} \Big) \Big\} \Big).
\end{align}

\subsection{Spectral Efficiency and Energy Efficiency Tradeoff}

In this paper, we evaluate the energy efficiency performance for the considered system by exploring bit energy, which is defined as energy-per-bit normalized to background noise spectral level \cite{Verdu} denoted as ${E_{\mathrm{b}}}/{N_0}$ (dB), where $E_{\mathrm{b}}$ is the transmit energy-per-bit. In particular, the minimum bit energy has been widely recognized as a performance metric for reliable communications in low-SNR regime \cite{Verdu}. Also, wideband slope, which is defined as the curve slope of spectral efficiency versus bit energy at the minimum bit energy, has emerged as a classical concept that facilitates analyzing the system energy efficiency at low-power and wideband levels. Different from the previous works focusing on the tradeoff between spectral efficiency and energy efficiency based on the Shannon capacity, e.g., \cite{Verdu}, \cite{Tulino}, \cite{Poor}, we aim to perform an analysis of energy efficiency in IRS-aided wireless communication systems under statistical QoS limitations. Specifically, we analyze the normalized effective capacity-bit energy tradeoff. Moreover, the effective capacity provides a characterization of the arrival process, which is also regarded as a measure of the average transmission rate, since the average arrival rate is identical to the average departure rate when the queue is in steady-state \cite{Negi,Amiri,Hsiao}.

%\footnote{In our future work, we will analyze the energy efficiency performance of active IRS-aided wireless communication systems under statistical QoS constraints with considering the energy consumption of IRS.}

In the following, we investigate the minimum bit energy and the tradeoff between the normalized effective capacity and bit energy to investigate the energy efficiency of the considered system. Specifically, for a given $\mu$, the minimum bit energy under statistical QoS constraints is expressed as \cite{Mustafa}
\begin{align}
  \label{min_bit_energy}
  {\frac{E_{\mathrm{b}}}{N_0}}_{\min} = \lim\limits_{{\mathrm{SNR}} \rightarrow 0} \frac{\mathrm{SNR}}{C_{\mathrm{E}} \big( {\mathrm{SNR}} \big) } = \frac{1}{\dot{C}_{\mathrm{E}} \big( 0 \big)},
\end{align}
where $\dot{C}_{\mathrm{E}} \big( 0 \big)$ denotes the first derivative of the normalized effective capacity at ${\mathrm{SNR}} = 0$. Additionally, at the minimum bit energy in \eqref{min_bit_energy}, the curve slope of the normalized effective capacity versus bit energy in bit/s/Hz/($3$ dB) is given by \cite{Gursoy}
\begin{align}
  \label{wideband_slope}
  S_0 &= \lim\limits_{\frac{E_{\mathrm{b}}}{N_0} \downarrow {\frac{E_{\mathrm{b}}}{N_0}}_{\min}} \frac{C_{\mathrm{E}} \Big( \displaystyle{\frac{E_{\mathrm{b}}}{N_0}} \Big) }{10 \log_{10} \displaystyle{\frac{E_{\mathrm{b}}}{N_0}} - 10  \log_{10} \displaystyle{{\frac{E_{\mathrm{b}}}{N_0}}_{\min}} } 10  \log_{10} 2 \notag\\
  & = - \frac{2 \big( \dot{C}_{\mathrm{E}} \big( 0 \big) \big)^2}{\ddot{C}_{\mathrm{E}} \big( 0 \big)} {\mathrm{ln}} 2,
\end{align}
where $\ddot{C}_{\mathrm{E}} \big( 0 \big)$ denotes the second derivative of the normalized effective capacity at ${\mathrm{SNR}} = 0$. Note that the minimum bit energy and the wideband slope for the considered system offer a linear approximation for the curve of the normalized effective capacity versus bit energy at low SNR, which indicates the tradeoff between spectral efficiency and energy efficiency, i.e.,
\begin{align}
  C_{\mathrm{E}} \Big( {\frac{E_{\mathrm{b}}}{N_0}} \Big) &= \frac{S_0}{10 \log_{10} 2} \bigg( 10 \log_{10} \frac{E_{\mathrm{b}}}{N_0} - 10 \log_{10} {\frac{E_{\mathrm{b}}}{N_0}}_{\min} \bigg) \notag\\
  & \ \ \ \ + o \bigg( {\frac{E_{\mathrm{b}}}{N_0}} - {\frac{E_{\mathrm{b}}}{N_0}}_{\min} \bigg).
\end{align}
To further analyze the energy efficiency for the considered system in the low-SNR regime, i.e., ${\mathrm{SNR}} = P/(N_0 B) \rightarrow 0$, we will consider the low-power regime for a given bandwidth $B$ and the wideband regime for a given average transmit power $P$ in the following sections, respectively.

\section{Energy Efficiency in the Low-Power Regime}

In this section, we aim to study the tradeoff between spectral efficiency and energy efficiency by exploring the minimum bit energy and wideband slope as the average transmit power $P$ diminishes while system bandwidth $B$ is fixed. In particular, for a given QoS exponent $\mu$, we first express the normalized effective capacity (bit/s/Hz) for the considered system in \eqref{normalized_C_E} as a function of ${\mathrm{SNR}}$. Then, we can obtain the minimum bit energy and the wideband slope for the considered system by the following theorem.
\begin{Thm}
  \label{low_power}
  The minimum bit energy and the wideband slope for the considered system in the low-power regime, respectively, are expressed as
  \begin{align}
  \label{minimum_bit_energy_lowpower}
  \bar {\frac{E_{\mathrm{b}}}{N_0}}_{\min} &= \frac{{\mathrm{ln}}2}{\mathbb{E} \big\{ \xi \big\} } {\text{ and}} \\
  \label{wideband_slope_lowpower}
  \bar S_0 &= \frac{2 {\mathrm{ln}}2 \big( \mathbb{E} \big\{ \xi \big\} \big)^2 }{\big( \mu T B + {\mathrm{ln}}2 \big) \mathbb{E} \big\{ \xi^2 \big\} - \mu T B \big( \mathbb{E} \big\{ \xi \big\} \big)^2}.
  \end{align}
\end{Thm}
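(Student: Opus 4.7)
The plan is to reduce Theorem~\ref{low_power} to a direct Taylor expansion of the normalized effective capacity in \eqref{normalized_C_E} around $\mathrm{SNR}=0$, and then to substitute into the definitions \eqref{min_bit_energy} and \eqref{wideband_slope}. The first simplifying step is to absorb the $\log_2$ into the exponent by writing $\exp\bigl(-\mu T B \log_2(1+\mathrm{SNR}\,\xi)\bigr)=(1+\mathrm{SNR}\,\xi)^{-\beta}$ with $\beta\triangleq \mu T B/\ln 2$, so that \eqref{normalized_C_E} becomes $C_{\mathrm{E}}(\mathrm{SNR},\mu)=-\frac{1}{\mu T B}\ln f(\mathrm{SNR})$ where $f(\mathrm{SNR})\triangleq\mathbb{E}\bigl\{(1+\mathrm{SNR}\,\xi)^{-\beta}\bigr\}$. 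Because $\xi=\bigl(\sqrt{\ell_{\mathrm h}}\xi_{\mathrm d}+\sqrt{\ell_{\mathrm f}\ell_{\mathrm g}}\xi_{\mathrm r}\bigr)^{2}$ is a.s.\ finite and nonnegative and the integrand $(1+\mathrm{SNR}\,\xi)^{-\beta}$ together with its first two $\mathrm{SNR}$-derivatives is uniformly bounded on a neighbourhood of $0$ by integrable functions (after invoking the moment bounds on $\xi$ that follow from the Nakagami model), differentiation under the expectation is justified.

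Next, I would compute $f(0)=1$, $f'(0)=-\beta\,\mathbb{E}\{\xi\}$, and $f''(0)=\beta(\beta+1)\,\mathbb{E}\{\xi^{2}\}$ by directly differentiating the integrand. Applying the chain rule to $-\tfrac{1}{\mu T B}\ln f$ and using $\beta/(\mu T B)=1/\ln 2$ yields $\dot{C}_{\mathrm{E}}(0)=\mathbb{E}\{\xi\}/\ln 2$ and
\begin{align*}
\ddot{C}_{\mathrm{E}}(0)=-\frac{1}{(\ln 2)^{2}}\Bigl[(\mu T B+\ln 2)\,\mathbb{E}\{\xi^{2}\}-\mu T B\,(\mathbb{E}\{\xi\})^{2}\Bigr].
\end{align*}
Substituting $\dot{C}_{\mathrm{E}}(0)$ into \eqref{min_bit_energy} immediately gives the first formula \eqref{minimum_bit_energy_lowpower}, and substituting both $\dot{C}_{\mathrm{E}}(0)$ and $\ddot{C}_{\mathrm{E}}(0)$ into \eqref{wideband_slope} yields \eqref{wideband_slope_lowpower} after the $(\ln 2)^{2}$ factors cancel.

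There is no substantial obstacle: the only delicate point is justifying the exchange of derivative and expectation, which is standard once one notes that the Nakagami-$m$ fading assumption makes $\xi$ have finite moments of all orders, so $\mathbb{E}\{\xi\}$ and $\mathbb{E}\{\xi^{2}\}$ (the only two quantities that end up in the final expressions) exist and are finite. The remainder of the argument is mechanical Taylor expansion, but I would still explicitly verify that $\xi_{\mathrm r}=\sum_{n\in\mathcal{N}}|f_{n}||g_{n}|$, being a finite sum of products of Nakagami-distributed magnitudes, indeed has finite second moment, so that the claimed expressions are well defined for every choice of system parameters in the model of Section~II.
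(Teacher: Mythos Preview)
Your proposal is correct and follows essentially the same route as the paper's proof in Appendix~B: both compute $\dot{C}_{\mathrm{E}}(0)$ and $\ddot{C}_{\mathrm{E}}(0)$ by differentiating \eqref{normalized_C_E} in $\mathrm{SNR}$ and then substitute into \eqref{min_bit_energy} and \eqref{wideband_slope}. Your reparameterization $\beta=\mu T B/\ln 2$ and the explicit dominated-convergence justification for differentiating under the expectation are cosmetic additions of clarity and rigor that the paper omits, but the underlying argument is identical.
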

\begin{proof}
  Please refer to Appendix B for details.
\end{proof}

\begin{Remark}
  \label{low_power_no_QoS}
  As can be observed, the minimum bit energy for the considered system in the low-power regime in \eqref{minimum_bit_energy_lowpower} is independent of the QoS exponent $\mu$. Also, we observe that by assuming $\mu \rightarrow 0$, the wideband slope for the considered system in the low-power regime in \eqref{wideband_slope_lowpower} can be further expressed as
  \begin{align}
  \bar S_0 = \frac{2 \big( \mathbb{E} \big\{ \xi \big\} \big)^2 }{\mathbb{E} \big\{ \xi^2 \big\}}.
  \end{align}
  As such, we conclude that the energy efficiency for the considered system can approach that for a system without QoS limitations in the low-power regime. Besides, the system may incur spectral efficiency degradation or more additional energy cost for given system performance, due to the existence of QoS constraints.
\end{Remark}

Next, we consider a special scenario for the considered system with a sufficiently large number of the IRS reflecting elements in the low-power regime. Specifically, we can acquire the minimum bit energy and the wideband slope in this scenario by the following corollary.
\begin{Cor}
  \label{large_number_elements}
  By assuming that the number of the IRS reflecting elements, i.e., $N$, is sufficiently large, the minimum bit energy in \eqref{minimum_bit_energy_lowpower} and the wideband slope in \eqref{wideband_slope_lowpower} for the considered system in the low-power regime are further approximated by
  \begin{align}
  \label{minimum_bit_energy_large_number_elements}
  \hat {\frac{E_{\mathrm{b}}}{N_0}}_{\min} &= \frac{{\mathrm{ln}}2}{\ell_{\mathrm{h}} {\mathbb{E}} \big\{ \xi_{\mathrm{d}}^2 \big\} + \ell_{\mathrm{f}} \ell_{\mathrm{g}} {\mathbb{E}} \big\{ \xi_{\mathrm{r}}^2 \big\} + 2 \sqrt{\ell_{\mathrm{h}} \ell_{\mathrm{f}} \ell_{\mathrm{g}}} {\mathbb{E}} \big\{ \xi_{\mathrm{d}} \big\} {\mathbb{E}} \big\{ \xi_{\mathrm{r}} \big\}}
  \end{align}
  and \eqref{wideband_slope_large_number_elements} at the top of next page, respectively, where ${\mathbb{E}} \big\{ \xi_{\mathrm{d}}^k \big\}$ and ${\mathbb{E}} \big\{ \xi_{\mathrm{r}}^k \big\}$ denote the $k$-th moments of $\xi_{\mathrm{d}}$ and $\xi_{\mathrm{r}}$, respectively.
  \begin{figure*}[ht]
  \begin{align}
  \label{wideband_slope_large_number_elements}
  &\hat S_0 = 2 {\mathrm{ln}}2 \Big( {\ell_{\mathrm{h}} {\mathbb{E}} \big\{ \xi_{\mathrm{d}}^2 \big\} + \ell_{\mathrm{f}} \ell_{\mathrm{g}} {\mathbb{E}} \big\{ \xi_{\mathrm{r}}^2 \big\} + 2 \sqrt{\ell_{\mathrm{h}} \ell_{\mathrm{f}} \ell_{\mathrm{g}}} {\mathbb{E}} \big\{ \xi_{\mathrm{d}} \big\} {\mathbb{E}} \big\{ \xi_{\mathrm{r}} \big\}} \Big)^2 \bigg( \big( \mu T B + {\mathrm{ln}}2 \big) \Big( \ell_{\mathrm{h}}^2 {\mathbb{E}} \big\{ \xi_{\mathrm{d}}^4 \big\} + \ell_{\mathrm{f}}^2 \ell_{\mathrm{g}}^2 {\mathbb{E}} \big\{ \xi_{\mathrm{r}}^4 \big\} + 6 \ell_{\mathrm{h}} \ell_{\mathrm{f}} \ell_{\mathrm{g}} {\mathbb{E}} \big\{ \xi_{\mathrm{d}}^2 \big\} {\mathbb{E}} \big\{ \xi_{\mathrm{r}}^2 \big\} \notag\\
  & + 4 \ell_{\mathrm{h}} \sqrt{\ell_{\mathrm{h}} \ell_{\mathrm{f}} \ell_{\mathrm{g}}} {\mathbb{E}} \big\{ \xi_{\mathrm{d}}^3 \big\} {\mathbb{E}} \big\{ \xi_{\mathrm{r}} \big\} + 4 \ell_{\mathrm{f}} \ell_{\mathrm{g}} \sqrt{\ell_{\mathrm{h}} \ell_{\mathrm{f}} \ell_{\mathrm{g}}} {\mathbb{E}} \big\{ \xi_{\mathrm{d}} \big\} {\mathbb{E}} \big\{ \xi_{\mathrm{r}}^3 \big\} \Big) - \mu T B \Big( \ell_{\mathrm{h}} {\mathbb{E}} \big\{ \xi_{\mathrm{d}}^2 \big\} + \ell_{\mathrm{f}} \ell_{\mathrm{g}} {\mathbb{E}} \big\{ \xi_{\mathrm{r}}^2 \big\} + 2 \sqrt{\ell_{\mathrm{h}} \ell_{\mathrm{f}} \ell_{\mathrm{g}}} {\mathbb{E}} \big\{ \xi_{\mathrm{d}} \big\} {\mathbb{E}} \big\{ \xi_{\mathrm{r}} \big\} \Big)^2 \bigg)^{-1}.
  \end{align}\hrulefill\vspace*{-4mm}
  \end{figure*}
\end{Cor}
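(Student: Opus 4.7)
The plan is to reduce the corollary to an algebraic manipulation of the formulas from Theorem~\ref{low_power}. By that theorem, both the minimum bit energy \eqref{minimum_bit_energy_lowpower} and the wideband slope \eqref{wideband_slope_lowpower} are fully determined by the first two moments $\mathbb{E}\{\xi\}$ and $\mathbb{E}\{\xi^{2}\}$ of the aggregate gain $\xi=(\sqrt{\ell_{\mathrm{h}}}\xi_{\mathrm{d}}+\sqrt{\ell_{\mathrm{f}}\ell_{\mathrm{g}}}\xi_{\mathrm{r}})^{2}$. My strategy is therefore to expand these two moments via the binomial theorem in $\sqrt{\ell_{\mathrm{h}}}\xi_{\mathrm{d}}+\sqrt{\ell_{\mathrm{f}}\ell_{\mathrm{g}}}\xi_{\mathrm{r}}$ and then exploit the probabilistic independence of the direct-link magnitude $\xi_{\mathrm{d}}=|h|$ from the cascaded-link aggregate $\xi_{\mathrm{r}}=\sum_{n}|f_{n}||g_{n}|$, which holds because $h$ is independent of $\mathbf{f}$ and $\mathbf{g}$ in the system model.

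For \eqref{minimum_bit_energy_large_number_elements}, squaring gives $\xi=\ell_{\mathrm{h}}\xi_{\mathrm{d}}^{2}+2\sqrt{\ell_{\mathrm{h}}\ell_{\mathrm{f}}\ell_{\mathrm{g}}}\,\xi_{\mathrm{d}}\xi_{\mathrm{r}}+\ell_{\mathrm{f}}\ell_{\mathrm{g}}\,\xi_{\mathrm{r}}^{2}$; taking expectation and using independence to factor $\mathbb{E}\{\xi_{\mathrm{d}}\xi_{\mathrm{r}}\}=\mathbb{E}\{\xi_{\mathrm{d}}\}\mathbb{E}\{\xi_{\mathrm{r}}\}$ yields precisely the denominator appearing in \eqref{minimum_bit_energy_large_number_elements}. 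Dividing $\ln 2$ by this quantity, as prescribed by \eqref{minimum_bit_energy_lowpower}, immediately reproduces \eqref{minimum_bit_energy_large_number_elements}.

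For \eqref{wideband_slope_large_number_elements}, I would expand $\xi^{2}=(\sqrt{\ell_{\mathrm{h}}}\xi_{\mathrm{d}}+\sqrt{\ell_{\mathrm{f}}\ell_{\mathrm{g}}}\xi_{\mathrm{r}})^{4}$ by the binomial theorem into five terms with coefficients $1,4,6,4,1$, take expectations term-by-term, and factor each cross-moment as $\mathbb{E}\{\xi_{\mathrm{d}}^{k}\xi_{\mathrm{r}}^{4-k}\}=\mathbb{E}\{\xi_{\mathrm{d}}^{k}\}\mathbb{E}\{\xi_{\mathrm{r}}^{4-k}\}$. After regrouping half-integer powers of $\ell_{\mathrm{h}}\ell_{\mathrm{f}}\ell_{\mathrm{g}}$, the result is exactly the long bracket that multiplies $(\mu TB+\ln 2)$ in the denominator of \eqref{wideband_slope_large_number_elements}. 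Substituting this expression for $\mathbb{E}\{\xi^{2}\}$ together with the previously obtained form of $\mathbb{E}\{\xi\}$ into \eqref{wideband_slope_lowpower} then reproduces \eqref{wideband_slope_large_number_elements} after routine rearrangement.

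I expect the main obstacle to be clerical rather than conceptual: one must carefully track the binomial coefficients, the half-integer powers of $\ell_{\mathrm{h}}\ell_{\mathrm{f}}\ell_{\mathrm{g}}$, and the proper placement of $(\mathbb{E}\{\xi\})^{2}$ versus $\mathbb{E}\{\xi^{2}\}$ in the denominator of $\bar S_{0}$. The large-$N$ hypothesis itself plays no role in the algebraic step above — independence of $h$ from $(\mathbf{f},\mathbf{g})$ is all that is invoked — but sets up subsequent developments in the paper, where the marginal moments $\mathbb{E}\{\xi_{\mathrm{r}}^{k}\}$ of the i.i.d.\ sum $\xi_{\mathrm{r}}=\sum_{n}|f_{n}||g_{n}|$ are themselves replaced by tractable closed-form or Gaussian-type approximations that are accurate only when $N$ is large.
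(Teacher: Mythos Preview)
Your algebraic reduction is correct and is exactly what the paper does in the final step of its proof: equations \eqref{xi_1st} and \eqref{xi_2rd} in Appendix~C are precisely your binomial expansions of $\mathbb{E}\{\xi\}$ and $\mathbb{E}\{\xi^{2}\}$, factored via the independence of $\xi_{\mathrm{d}}$ and $\xi_{\mathrm{r}}$, and these are then substituted into \eqref{minimum_bit_energy_lowpower} and \eqref{wideband_slope_lowpower}.

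Where the paper differs is in what it does \emph{before} this step. The bulk of Appendix~C is devoted to supplying tractable closed-form surrogates for the moments $\mathbb{E}\{\xi_{\mathrm{r}}^{k}\}$: it computes the exact first two moments of $\xi_{\mathrm{r}}$ from the double-Nakagami structure, invokes a moment-matching lemma (\textbf{Lemma~\ref{moment_matching}}) to fit a Gamma distribution $\mathrm{Gamma}(\alpha_{\xi_{\mathrm{r}}},\beta_{\xi_{\mathrm{r}}})$ to $\xi_{\mathrm{r}}$, and then uses the Gamma moment formula $\mathbb{E}\{\xi_{\mathrm{r}}^{k}\}=\beta_{\xi_{\mathrm{r}}}^{k}\Gamma(\alpha_{\xi_{\mathrm{r}}}+k)/\Gamma(\alpha_{\xi_{\mathrm{r}}})$ for $k=3,4$. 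This is where the large-$N$ hypothesis actually enters the proof (to justify the accuracy of the Gamma fit), and it is the source of the word ``approximated'' in the corollary statement. Your remark that the large-$N$ hypothesis ``sets up subsequent developments'' is therefore slightly off: in the paper, those developments are \emph{inside} this proof, not after it, and they use a Gamma (not Gaussian) approximation. Your purely algebraic argument yields the displayed identities exactly, but does not by itself explain why the corollary is phrased as an approximation or why large $N$ is assumed.
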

\begin{proof}
  Please refer to Appendix C for details.
\end{proof}

\begin{Remark}
  Note that the minimum bit energy for conventional wireless communication systems without deploying IRS in the low-power regime \cite{Mustafa}, i.e., $\hat {\frac{E_{\mathrm{b}}}{N_0}}_{\min, 0}$, is much larger than the counterpart for the considered system in \eqref{minimum_bit_energy_large_number_elements}, especially for the scenario with a sufficiently large number of the IRS reflecting elements, i.e.,
  \begin{align}
  \hat {\frac{E_{\mathrm{b}}}{N_0}}_{\min, 0} = \frac{{\mathrm{ln}}2}{\ell_{\mathrm{h}} {\mathbb{E}} \big\{ \xi_{\mathrm{d}}^2 \big\}} \geqslant \hat {\frac{E_{\mathrm{b}}}{N_0}}_{\min},
  \end{align}
  which indicates the superiority of improving the system energy efficiency by introducing IRS, despite the fact that there is the QoS limitations.
\end{Remark}

\begin{Remark}
  \label{remark_infiniteN_lower_power}
  By considering a special case with an infinite large number of the IRS reflecting elements, i.e., $N \rightarrow \infty$, we can further approximate the minimum bit energy in \eqref{minimum_bit_energy_large_number_elements} and the wideband slope in \eqref{wideband_slope_large_number_elements} for the considered system in the low-power regime as
  \begin{align}
  \hat {\frac{E_{\mathrm{b}}}{N_0}}_{\min} = 0 {\text{ and }} \hat S_0 = 2,
  \end{align}
  respectively. As such, we conclude that deploying IRS is able to compensate the spectral efficiency degradation to a certain extent, caused by the stringent QoS limitations. Also, deploying a sufficiently large number of IRS reflecting elements can achieve high spectral efficiency at low bit energy levels under statistical QoS constraints.
\end{Remark}

Based on the above special scenario for the considered system, we further consider a practical implementation that the phase shift adopted at the $n$-th IRS reflecting element can merely admit one of the $2^b$ discrete value, where $b$ is a constant bit resolution for uniformly quantizing the phase shift interval, i.e., $\theta_n \in \big[ 0,2 \pi \big)$, $\forall n \in {\mathcal{N}}$. Therefore, the set of the discrete phase shift of each reflecting element is expressed as ${\mathcal{F}} \triangleq \big\{ 0, \Delta \theta, \ldots, \left( L-1 \right) \Delta \theta \big\}$, where $\Delta \theta = 2 \pi / 2^b$. Here, we define the discrete phase shift of the $n$-th IRS reflecting element and the corresponding quantization error as $\bar \theta_n$ and $\hat \theta_n = \bar \theta_n - \theta_n^{\star}$, respectively, where $\theta_n^{\star}$ is obtained in \textbf{Lemma \ref{optimal_theta}}. As such, we rewrite the channel coefficient gain as $\bar \xi = \big( \sqrt{\ell_{\mathrm{h}}} \xi_{\mathrm{d}} + \sqrt{\ell_{\mathrm{f}} \ell_{\mathrm{g}}} \bar \xi_{\mathrm{r}} \big)^2$, where $\bar \xi_{\mathrm{r}} = \sum\nolimits_{n\in {\mathcal{N}}} \big| f_n \big| \big| g_n \big| {\mathrm{exp}} \big({\jmath{\hat \theta_n}} \big)$. Note that $\hat \theta_n$ is independently and uniformly distributed in $\big[ {- \pi/2^b,\pi/2^b} \big)$ \cite{Qingqing2}. As such, the minimum bit energy and the wideband slope under the assumption of the discrete phase shifts can be acquired by the following theorem.
\begin{Thm}
  \label{low_power_discrete}
  The minimum bit energy and the wideband slope for the considered system with discrete phase shifters in the low-power regime, respectively, are expressed as
  \begin{align}
  \label{minimum_bit_energy_lowpower_discrete}
  \breve {\frac{E_{\mathrm{b}}}{N_0}}_{\min} &= \frac{{\mathrm{ln}}2}{\mathbb{E} \big\{ \bar \xi \big\} } {\text{ and}} \\
  \label{wideband_slope_lowpower_discrete}
  \breve S_0 &= \frac{2 {\mathrm{ln}}2 \big( \mathbb{E} \big\{ \bar \xi \big\} \big)^2 }{\big( \mu T B + {\mathrm{ln}}2 \big) \mathbb{E} \big\{ \bar \xi^2 \big\} - \mu T B \big( \mathbb{E} \big\{ \bar \xi \big\} \big)^2}.
  \end{align}
\end{Thm}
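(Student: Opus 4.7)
The plan is to observe that the derivation is structurally identical to that of \textbf{Theorem \ref{low_power}}, with the random variable $\xi$ replaced throughout by $\bar \xi$. First, I would substitute the discrete-phase-shift channel gain into the normalized effective capacity \eqref{normalized_C_E}, so that
\begin{align*}
  C_{\mathrm{E}} \big( {\mathrm{SNR}},\mu \big) = - \frac{1}{\mu T B} {\mathrm{ln}} \Big( \mathbb{E} \Big\{ {\mathrm{exp}} \Big( - \mu T B {\mathrm{log}}_2 \big( 1 + {\mathrm{SNR}} \, \bar \xi \big) \Big) \Big\} \Big).
\end{align*}
As in the continuous-phase case, the proof of \textbf{Theorem \ref{low_power}} proceeds by a Taylor expansion of ${\mathrm{log}}_2 \big( 1 + {\mathrm{SNR}} \, \xi \big)$ and of the outer exponential around ${\mathrm{SNR}} = 0$, retaining terms up to second order, followed by expansion of the outer logarithm to extract $\dot{C}_{\mathrm{E}}(0)$ and $\ddot{C}_{\mathrm{E}}(0)$. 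The resulting expressions depend on the underlying gain only through its first two moments $\mathbb{E}\{\xi\}$ and $\mathbb{E}\{\xi^2\}$, which is precisely the structure that gives \eqref{minimum_bit_energy_lowpower} and \eqref{wideband_slope_lowpower}.

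Next, I would verify that the same machinery applies to $\bar \xi = \big( \sqrt{\ell_{\mathrm{h}}} \xi_{\mathrm{d}} + \sqrt{\ell_{\mathrm{f}} \ell_{\mathrm{g}}} \bar \xi_{\mathrm{r}} \big)^2$. Since $|{\mathrm{exp}}(\jmath \hat\theta_n)| = 1$, we have $|\bar \xi_{\mathrm{r}}| \leqslant \sum_{n \in {\mathcal{N}}} |f_n||g_n|$, so all moments of $\bar \xi$ are dominated by moments of $\xi$ and are therefore finite under the Nakagami-$m$ assumption. This boundedness justifies the dominated-convergence interchange of expectation and Taylor expansion needed to evaluate $\dot{C}_{\mathrm{E}}(0)$ and $\ddot{C}_{\mathrm{E}}(0)$ in closed form, exactly as in Appendix B, with $\bar \xi$ in place of $\xi$. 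Substituting the resulting derivatives into \eqref{min_bit_energy} and \eqref{wideband_slope} then yields \eqref{minimum_bit_energy_lowpower_discrete} and \eqref{wideband_slope_lowpower_discrete}.

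I do not expect any genuinely new obstacle beyond \textbf{Theorem \ref{low_power}}, since the discrete phase shifts enter only through the distribution of the quantization error $\hat \theta_n$, and that distribution matters only insofar as it determines $\mathbb{E}\{\bar \xi\}$ and $\mathbb{E}\{\bar \xi^2\}$, which are then carried symbolically through the derivation. The mildly delicate step is to confirm that, even though $\bar \xi_{\mathrm{r}}$ is now complex-valued and the cross terms in $\bar \xi$ involve ${\mathrm{Re}}\{\bar \xi_{\mathrm{r}}\}$ rather than $\xi_{\mathrm{r}}$, the expansion of $\mathbb{E}\{\bar \xi^2\}$ remains well-defined and finite; this follows immediately from the triangle inequality together with the independence of $\hat \theta_n$ from $|f_n|$, $|g_n|$, and $h$. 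Hence, no further distributional assumption beyond those already made in \textbf{Theorem \ref{low_power}} is required, and the statement follows.
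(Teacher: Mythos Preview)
Your proposal is correct and follows essentially the same approach as the paper, which simply states that the proof ``similarly follows Appendix B'' with $\bar\xi$ in place of $\xi$ and omits the details. Your added remarks on moment finiteness via $|\bar\xi_{\mathrm{r}}| \leqslant \xi_{\mathrm{r}}$ and on the complex-valued nature of $\bar\xi_{\mathrm{r}}$ (so that $\bar\xi$ should really be read as $\big|\sqrt{\ell_{\mathrm{h}}}\,\xi_{\mathrm{d}} + \sqrt{\ell_{\mathrm{f}}\ell_{\mathrm{g}}}\,\bar\xi_{\mathrm{r}}\big|^2$) are careful observations that the paper glosses over notationally, but they do not change the argument.
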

\begin{proof}
  The proof of the theorem similarly follows Appendix B. Here, we omit the details of the proof due to the space limitation of this paper.
\end{proof}

\begin{Remark}
  Note that the minimum bit energy for the considered system with discrete phase shifters in the low-power regime in \eqref{minimum_bit_energy_lowpower_discrete} is much larger than the counterpart with continuous phase shifters in \eqref{minimum_bit_energy_lowpower}, especially for the limited bit-resolution implementation, i.e.,
  \begin{align}
  \breve {\frac{E_{\mathrm{b}}}{N_0}}_{\min} &= \frac{{\mathrm{ln}}2}{\ell_{\mathrm{h}} {\mathbb{E}} \big\{ \xi_{\mathrm{d}}^2 \big\} + \ell_{\mathrm{f}} \ell_{\mathrm{g}} {\mathbb{E}} \big\{ \bar \xi_{\mathrm{r}}^2 \big\} + 2 \sqrt{\ell_{\mathrm{h}} \ell_{\mathrm{f}} \ell_{\mathrm{g}}} {\mathbb{E}} \big\{ \xi_{\mathrm{d}} \bar \xi_{\mathrm{r}} \big\}} \notag\\
  &\overset{(a)} \geqslant \frac{{\mathrm{ln}}2}{\ell_{\mathrm{h}} {\mathbb{E}} \big\{ \xi_{\mathrm{d}}^2 \big\} + \ell_{\mathrm{f}} \ell_{\mathrm{g}} {\mathbb{E}} \big\{ \xi_{\mathrm{r}}^2 \big\} + 2 \sqrt{\ell_{\mathrm{h}} \ell_{\mathrm{f}} \ell_{\mathrm{g}}} {\mathbb{E}} \big\{ \xi_{\mathrm{d}} \xi_{\mathrm{r}} \big\}}  \notag\\
  & = \frac{{\mathrm{ln}}2}{\mathbb{E} \big\{ \xi \big\} } = \bar {\frac{E_{\mathrm{b}}}{N_0}}_{\min},
  \end{align}
  where ($a$) is acquired by exploiting the following inequalities, i.e., \eqref{bar_xi_r_xi_r} and \eqref{xi_d_bar_xi_r_xi_d_xi_r} at the top of next page, where ($b$) and ($d$) can be obtained by the fact that $|h|$, $|f_n|$, $|g_n|$, and ${\mathrm{exp}} \big({\jmath{\hat \theta_n}} \big)$ are independent with each other as well as $\mathbb{E} \big\{ {\mathrm{exp}} \big({\jmath{\hat \theta_n}} \big) \big\} = \mathbb{E} \big\{ {\mathrm{exp}} \big({- \jmath{\hat \theta_n}} \big) \big\} = 2^b/ \pi \sin \big( \pi / 2^b \big)$, while ($c$) and ($e$) are explained by the fact that $2^b/ \pi \sin \big( \pi / 2^b \big) \leqslant 1$ for any constant $b > 0$. As can be observed, the equality in (a) holds if and only if $b \rightarrow \infty$ that represents the continuous phase shifts without quantization errors as $\lim \nolimits_{b \rightarrow \infty} 2^b/ \pi \sin \big( \pi / 2^b \big) = 1$. As such, we conclude that deploying IRS with discrete phase shifters for the considered system may incur more stringent energy requirements for reliable communications compared with implementing that with the continuous counterpart, despite the fact that there is the QoS limitations.
  \begin{figure*}[ht]
  \begin{align}
  \label{bar_xi_r_xi_r}
  {\mathbb{E}} \Big\{ \bar \xi_{\mathrm{r}}^2 \Big\} &= {\mathbb{E}} \bigg\{ \Big| \sum\limits_{n\in {\mathcal{N}}} \big| f_n \big| \big| g_n \big| {\mathrm{exp}} \big({\jmath{\hat \theta_n}} \big) \Big|^2 \bigg\} = {\mathbb{E}} \bigg\{ \sum\limits_{n\in {\mathcal{N}}} \sum\limits_{i \in {\mathcal{N}\backslash }\left\{ n \right\}} \big| f_n \big| \big| g_n \big| \big| f_i \big| \big| g_i \big| {\mathrm{exp}} \big({\jmath{\hat \theta_n} - \jmath{\hat \theta_i}} \big) \bigg\} + {\mathbb{E}} \bigg\{ \sum\limits_{n\in {\mathcal{N}}} \big| f_n \big|^2 \big| g_n \big|^2 \bigg\} \notag\\
  &\overset{(b)} = \Big( \frac{2^b}{\pi} \sin \Big( \frac{\pi}{2^b} \Big) \Big)^2 {\mathbb{E}} \bigg\{ \sum\limits_{n\in {\mathcal{N}}} \sum\limits_{i \in {\mathcal{N}\backslash }\left\{ n \right\}} \big| f_n \big| \big| g_n \big| \big| f_i \big| \big| g_i \big| \bigg\} + {\mathbb{E}} \bigg\{ \sum\limits_{n\in {\mathcal{N}}} \big| f_n \big|^2 \big| g_n \big|^2 \bigg\} \notag\\
  &\overset{(c)} \leqslant {\mathbb{E}} \bigg\{ \sum\limits_{n\in {\mathcal{N}}} \sum\limits_{i \in {\mathcal{N}\backslash }\left\{ n \right\}} \big| f_n \big| \big| g_n \big| \big| f_i \big| \big| g_i \big| \bigg\} + {\mathbb{E}} \bigg\{ \sum\limits_{n\in {\mathcal{N}}} \big| f_n \big|^2 \big| g_n \big|^2 \bigg\} = {\mathbb{E}} \bigg\{ \Big( \sum\limits_{n\in {\mathcal{N}}} \big| f_n \big| \big| g_n \big| \Big)^2 \bigg\} = {\mathbb{E}} \Big\{ \xi_{\mathrm{r}}^2 \Big\}. \\
  \label{xi_d_bar_xi_r_xi_d_xi_r}
  {\mathbb{E}} \Big\{ \xi_{\mathrm{d}} \bar \xi_{\mathrm{r}} \Big\} &= {\mathbb{E}} \bigg\{ \big| h \big| \sum\limits_{n\in {\mathcal{N}}} \big| f_n \big| \big| g_n \big| {\mathrm{exp}} \big({\jmath{\hat \theta_n}} \big) \bigg\} \overset{(d)} = \Big( \frac{2^b}{\pi} \sin \Big( \frac{\pi}{2^b} \Big) \Big) {\mathbb{E}} \bigg\{ \big| h \big| \sum\limits_{n\in {\mathcal{N}}} \big| f_n \big| \big| g_n \big| \bigg\} \overset{(e)} \leqslant {\mathbb{E}} \bigg\{ \big| h \big| \sum\limits_{n\in {\mathcal{N}}} \big| f_n \big| \big| g_n \big| \bigg\} = {\mathbb{E}} \Big\{ \xi_{\mathrm{d}} \xi_{\mathrm{r}} \Big\}.
  \end{align}\hrulefill\vspace*{-4mm}
  \end{figure*}

\end{Remark}

\begin{Cor}
  \label{discrete_phase_shifts}
  By assuming that the number of the IRS reflecting elements, i.e., $N$, is sufficiently large while the quantization error of the discrete phase shift of the $n$-th IRS reflecting element, i.e., $\hat \theta_n$, is independently and uniformly distributed in $\big[ {- \pi/2^b,\pi/2^b} \big)$, the minimum bit energy in \eqref{minimum_bit_energy_lowpower_discrete} and the wideband slope in \eqref{wideband_slope_lowpower_discrete} for the considered system in the low-power regime are further approximated by $\check {\frac{E_{\mathrm{b}}}{N_0}}_{\min}$ and $\check S_0$, respectively, which share the similar structures with $\hat {\frac{E_{\mathrm{b}}}{N_0}}_{\min}$ in \eqref{minimum_bit_energy_large_number_elements} and $\hat S_0$ in \eqref{wideband_slope_large_number_elements}.
\end{Cor}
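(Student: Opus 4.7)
The plan is to mirror the derivation used for Corollary 1 (Appendix C), but substitute in the moments of the modified channel coefficient gain $\bar \xi$ that absorbs the phase-shift quantization errors $\hat \theta_n$. Because \textbf{Theorem \ref{low_power_discrete}} already expresses $\breve {\frac{E_{\mathrm{b}}}{N_0}}_{\min}$ and $\breve S_0$ purely in terms of $\mathbb{E} \big\{ \bar \xi \big\}$ and $\mathbb{E} \big\{ \bar \xi^2 \big\}$, the remaining work is to evaluate these two moments in closed form in the regime where $N$ is sufficiently large and $\hat \theta_n$ is i.i.d.\ uniform on $\big[ -\pi/2^b,\pi/2^b \big)$ across $n \in {\mathcal{N}}$.

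First, I would expand $\bar \xi = \big( \sqrt{\ell_{\mathrm{h}}} \xi_{\mathrm{d}} + \sqrt{\ell_{\mathrm{f}} \ell_{\mathrm{g}}} \bar \xi_{\mathrm{r}} \big)^2$ and $\bar \xi^2$ into monomials in $\xi_{\mathrm{d}}$ and $\bar \xi_{\mathrm{r}}$. Using the independence of the direct link $h$ from $\{ f_n, g_n, \hat \theta_n \}_{n \in {\mathcal{N}}}$, every cross term factors as a product of a moment of $\xi_{\mathrm{d}}$ and a moment of $\bar \xi_{\mathrm{r}}$. The direct-link moments $\mathbb{E} \big\{ \xi_{\mathrm{d}}^k \big\}$ follow from the Nakagami-$m_{\mathrm{h}}$ distribution exactly as in Appendix C.

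The main obstacle is computing the moments $\mathbb{E} \big\{ \bar \xi_{\mathrm{r}}^k \big\}$ for $k = 1,2,3,4$ of $\bar \xi_{\mathrm{r}} = \sum_{n \in {\mathcal{N}}} \big| f_n \big| \big| g_n \big| {\mathrm{exp}} \big( \jmath \hat \theta_n \big)$. Since each $\hat \theta_n$ is independent and uniform, one has $\mathbb{E} \big\{ {\mathrm{exp}} \big( \jmath \hat \theta_n \big) \big\} = \mathbb{E} \big\{ {\mathrm{exp}} \big( -\jmath \hat \theta_n \big) \big\} = \tfrac{2^b}{\pi} \sin \big( \pi / 2^b \big) \triangleq \eta$, already used in the remark preceding the corollary. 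Expanding $\big( \sum_n |f_n||g_n| {\mathrm{exp}}(\jmath \hat \theta_n) \big)^k$ by the multinomial rule and grouping by the multiplicities of repeated indices, each group collapses into a power of $\eta$ (from the exponential means) multiplied by a moment of the magnitudes $|f_n||g_n|$ that was already evaluated in Appendix C. For large $N$, the dominant contributions come from terms with all-distinct indices, reproducing exactly the structure of $\mathbb{E} \big\{ \xi_{\mathrm{r}}^k \big\}$ but rescaled by an appropriate power of $\eta$; the remaining ``diagonal'' corrections are of lower order in $N$ and are absorbed into the large-$N$ approximation in the same way as in \textbf{Corollary \ref{large_number_elements}}.

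Finally, I would substitute the resulting closed-form expressions for $\mathbb{E} \big\{ \bar \xi \big\}$ and $\mathbb{E} \big\{ \bar \xi^2 \big\}$ into \eqref{minimum_bit_energy_lowpower_discrete} and \eqref{wideband_slope_lowpower_discrete}. Since every $\xi_{\mathrm{r}}$-moment is simply rescaled by a power of $\eta$ and cross moments $\mathbb{E} \big\{ \xi_{\mathrm{d}}^j \xi_{\mathrm{r}}^k \big\}$ become $\eta^k \mathbb{E} \big\{ \xi_{\mathrm{d}}^j \xi_{\mathrm{r}}^k \big\}$, the resulting $\check {\frac{E_{\mathrm{b}}}{N_0}}_{\min}$ and $\check S_0$ inherit the same functional form as $\hat {\frac{E_{\mathrm{b}}}{N_0}}_{\min}$ in \eqref{minimum_bit_energy_large_number_elements} and $\hat S_0$ in \eqref{wideband_slope_large_number_elements}, which is the claimed similarity of structure. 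Consistency with Corollary 1 follows by observing that $\lim_{b \rightarrow \infty} \eta = 1$, reducing $\check {\frac{E_{\mathrm{b}}}{N_0}}_{\min}$ and $\check S_0$ back to $\hat {\frac{E_{\mathrm{b}}}{N_0}}_{\min}$ and $\hat S_0$, respectively.
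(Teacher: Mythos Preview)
Your proposal is correct in outline but takes a different computational route from the paper. The paper's Appendix~D does not expand $\bar\xi_{\mathrm{r}}^k$ by the multinomial theorem for $k=1,\ldots,4$ and then invoke large-$N$ dominance of the all-distinct-index terms. Instead, it computes only the first two moments of $\bar\xi_{\mathrm{r}}$ exactly (these are your $\eta$-weighted versions of \eqref{xi_r_1st}--\eqref{xi_r_2rd}), applies the Gamma moment-matching approximation of \textbf{Lemma~\ref{moment_matching}} to obtain $\bar\xi_{\mathrm{r}} \sim \mathrm{Gamma}(\alpha_{\bar\xi_{\mathrm{r}}},\beta_{\bar\xi_{\mathrm{r}}})$, and then reads off all higher moments $\mathbb{E}\{\bar\xi_{\mathrm{r}}^k\}$ from the Gamma moment formula, exactly as Appendix~C does for $\xi_{\mathrm{r}}$. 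Substituting these into the binomial expansions of $\mathbb{E}\{\bar\xi\}$ and $\mathbb{E}\{\bar\xi^2\}$ then reproduces the structure of \eqref{minimum_bit_energy_large_number_elements}--\eqref{wideband_slope_large_number_elements} automatically.

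The trade-off: your direct multinomial route is more elementary and avoids importing the Gamma surrogate, but it forces you to track several diagonal correction terms and to argue carefully that they are negligible in the same sense as in Corollary~\ref{large_number_elements}. One caution on that point: Appendix~C does \emph{not} actually discard the lower-order ``$N$'' term in the second moment when forming the Gamma fit, so your remark that ``the remaining diagonal corrections are \ldots absorbed in the same way as in Corollary~\ref{large_number_elements}'' slightly misstates what the paper does. The paper's route, by contrast, is mechanical---only two moments need be computed explicitly---and makes the structural parallel with \eqref{minimum_bit_energy_large_number_elements}--\eqref{wideband_slope_large_number_elements} immediate, since the same Gamma machinery is reused verbatim with $\xi_{\mathrm{r}}\mapsto\bar\xi_{\mathrm{r}}$.
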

\begin{proof}
  Please refer to Appendix D for details.
\end{proof}

\begin{Remark}
  By considering a special case with the infinite large number of the IRS reflecting elements, i.e., $N \rightarrow \infty$, we obtain the same results for the considered system with implementing the discrete phase shifters in the low-power regime as shown in \textbf{Remark \ref{remark_infiniteN_lower_power}}. As can be observed, deploying the sufficiently large number of the practical IRS reflecting elements can significantly reduce the energy consumption for the required spectral efficiency performance at low bit energy levels, even with limited bit-resolution phase shifters.
\end{Remark}

\section{Energy Efficiency in the Wideband Regime}

In this section, we aim to investigate the energy efficiency as system bandwidth $B$ increases while the average transmit power $P$ is fixed. In fact, the assumption of the flat fading will no longer be ensured as the system bandwidth grows. To obstacle this circumvent, we perform an energy efficiency analysis based on the parallel, non-interacting, and narrowband subchannels divided by the wideband channel, each of which experiences the independent flat fading. We consider two scenarios for analyzing energy efficiency in the wideband regime in the following.

\emph{1) Rich multipath fading:} In the scenario with rich multipath fading, as system bandwidth $B$ increases, the number of the independent resolvable subchannels, i.e., $N_{\mathrm{c}}$, grows \emph{linearly} while the bandwidth of each subchannel, i.e., coherence bandwidth $B_{\mathrm{c}}$, remains constant, where $B = N_{\mathrm{c}} B_{\mathrm{c}}$. In other words, the power allocation for each subchannel can gradually reduce to zero with increasing bandwidth. Here, for a given $\mu$, we rewrite the normalized effective capacity (bit/s/Hz) for the considered system in \eqref{normalized_C_E} with respect to ${\mathrm{SNR}}$ as
\begin{align}
  \label{normalized_C_E_rich_multipath}
  C_{\mathrm{E}} \big( {\mathrm{SNR}} \big) = - \frac{1}{\mu T B_{\mathrm{c}}} {\mathrm{ln}} \Big( \mathbb{E} \Big\{ {\mathrm{exp}} \big( {- \mu T B_{\mathrm{c}} {\mathrm{log}}_2 \big( 1 + {\mathrm{SNR}} \xi \big) } \big) \Big\} \Big).
\end{align}
We observe that the expressions of the normalized effective capacity by fixing $B$ in \eqref{normalized_C_E} and that by fixing $B_{\mathrm{c}}$ in \eqref{normalized_C_E_rich_multipath} share a similar structure. As such, we conclude that the results acquired in the low-power regime is also applicable to the scenario with rich multipath fading in the wideband regime.

\emph{2) Sparse multipath fading:} Different from the scenario with rich multipath fading, as system bandwidth $B$ increases, the number of the independent resolvable subchannels, i.e., $N_{\mathrm{c}}$, grows at most \emph{sublinearly} while the bandwidth of each subchannel, i.e., coherence bandwidth $B_{\mathrm{c}}$, increases in the scenario with sparse multipath fading \cite{Raghavan}. Here, we consider two following special cases for analyzing energy efficiency in the scenario with sparse multipath fading in the wideband regime. In particular, as for \textbf{case I} where the number of the independent resolvable subchannels is bounded, $B_{\mathrm{c}}$ grows linearly while $N_{\mathrm{c}}$ remains constant with increasing bandwidth $B$ \cite{Telatar}. As for \textbf{case II} where the number of the independent resolvable subchannels increases but only sublinearly, both $B_{\mathrm{c}}$ and $N_{\mathrm{c}}$ grow without bound as bandwidth $B$ increases \cite{Porrat}.

In the following, we aim to analyze the tradeoff between spectral efficiency and energy efficiency by exploring the minimum bit energy and wideband slope for case I ($B_{\mathrm{c}} \rightarrow \infty$) in the scenario with sparse multipath fading. In particular, for a given QoS exponent $\mu$ and a given average transmit power $P$, we first express the normalized effective capacity (bit/s/Hz) for the considered system in \eqref{normalized_C_E} merely as a function of $B_{\mathrm{c}}$, which is given by
\begin{align}
  \label{normalized_C_E_wideband}
  &C_{\mathrm{E}} \big( B_{\mathrm{c}} \big) \notag\\
  & = - \frac{1}{\mu T B_{\mathrm{c}}} {\mathrm{ln}} \Big( \mathbb{E} \Big\{ {\mathrm{exp}} \Big( {- \mu T B_{\mathrm{c}} {\mathrm{log}}_2 \big( 1 + \frac{P}{N_0 N_{\mathrm{c}} B_{\mathrm{c}}} \xi \big) } \Big) \Big\} \Big).
\end{align}
Then, we can obtain the minimum bit energy and the wideband slope for case I in the scenario with sparse multipath fading by the following theorem.
\begin{Thm}
  \label{wideband}
  In the scenario with sparse multipath fading where the numbers of the independent resolvable subchannels remains constant while the bandwidth of each subchannel increases sublinearly with increasing system bandwidth, i.e., $B_{\mathrm{c}} \rightarrow \infty$, the minimum bit energy and the wideband slope for the considered system in the wideband regime are written as
  \begin{align}
  \label{minimum_bit_energy_wideband}
  \tilde {\frac{E_{\mathrm{b}}}{N_0}}_{\min} &= - \frac{\displaystyle{\frac{\mu T P}{N_0 N_{\mathrm{c}} }}}{{\mathrm{ln}} \Big( \mathbb{E} \Big\{ {\mathrm{exp}} \Big( {- \displaystyle{\frac{\mu T P}{N_0 N_{\mathrm{c}} {\mathrm{ln}}2 } } \xi} \Big) \Big\} \Big)} {\text{ and}} \\
  \label{wideband_slope_wideband}
  \tilde S_0 &= 2 \Big({\displaystyle{\frac{N_0 N_{\mathrm{c}} {\mathrm{ln}}2 }{\mu T P} }}\Big)^2 \mathbb{E} \Big\{ {\mathrm{exp}} \Big( {- \displaystyle{\frac{\mu T P}{N_0 N_{\mathrm{c}} {\mathrm{ln}}2 } } \xi} \Big) \Big\} \notag\\
  &\ \ \ \times \bigg( {{\mathrm{ln}} \Big( \mathbb{E} \Big\{ {\mathrm{exp}} \Big( {- \displaystyle{\frac{\mu T P}{N_0 N_{\mathrm{c}} {\mathrm{ln}}2 } } \xi} \Big) \Big\} \Big)} \bigg)^2 \notag\\
  &\ \ \ \times \bigg( {\mathbb{E} \Big\{ \xi^2 {\mathrm{exp}} \Big( {- \displaystyle{\frac{\mu T P}{N_0 N_{\mathrm{c}} {\mathrm{ln}}2 } } \xi} \Big) \Big\}} \bigg)^{-1},
  \end{align}
  respectively.
\end{Thm}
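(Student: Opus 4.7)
The plan is to instantiate the general definitions \eqref{min_bit_energy} and \eqref{wideband_slope} on the wideband-parameterized normalized effective capacity \eqref{normalized_C_E_wideband}, regarded as a function of $\mathrm{SNR} = P/(N_0 N_{\mathrm{c}} B_{\mathrm{c}})$, so that the limit $\mathrm{SNR} \to 0$ is realized precisely by $B_{\mathrm{c}} \to \infty$. The defining structural feature of this regime, which distinguishes it from the low-power setting of \textbf{Theorem \ref{low_power}}, is that the product $\mu T B_{\mathrm{c}} \log_2(1+\mathrm{SNR}\,\xi)$ does \emph{not} vanish as $\mathrm{SNR} \to 0$: because $\mathrm{SNR}\cdot B_{\mathrm{c}} = P/(N_0 N_{\mathrm{c}})$ is held constant, the leading Taylor term $\log_2(1+x)\sim x/\ln 2$ drives this product to the finite random limit $\mu T P\xi/(N_0 N_{\mathrm{c}}\ln 2)$. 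Consequently the inner expectation in \eqref{normalized_C_E_wideband} converges to the moment-generating function $\mathbb{E}\{\exp(-\mu T P\xi/(N_0 N_{\mathrm{c}}\ln 2))\}$, which is precisely the object that appears in \eqref{minimum_bit_energy_wideband} and \eqref{wideband_slope_wideband}.

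Setting $s \triangleq \mathrm{SNR}$ and $\alpha \triangleq \mu T P/(N_0 N_{\mathrm{c}}\ln 2)$, I would proceed in the following order: (i) Taylor expand $\log_2(1+s\xi)$ to second order in $s$; (ii) multiply by $\mu T B_{\mathrm{c}} = \alpha\ln 2/s$ to obtain an exponent of the form $-\alpha\xi + \tfrac{1}{2}\alpha s\xi^2 + O(s^2)$; (iii) factor $e^{-\alpha\xi}$ out of the exponential and expand the residual factor to first order in $s$; (iv) take expectation term-by-term to obtain $A + \tfrac{1}{2}\alpha s M_2 + O(s^2)$, where $A \triangleq \mathbb{E}\{e^{-\alpha\xi}\}$ and $M_2 \triangleq \mathbb{E}\{\xi^2 e^{-\alpha\xi}\}$; (v) apply $\ln(A+\epsilon) = \ln A + \epsilon/A + O(\epsilon^2)$; and (vi) multiply by $-1/(\mu T B_{\mathrm{c}}) = -s/(\alpha\ln 2)$. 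This produces a clean two-term Taylor expansion of $C_{\mathrm{E}}(s)$ at the origin, from which $\dot{C}_{\mathrm{E}}(0) = -N_0 N_{\mathrm{c}}\ln A/(\mu T P)$ and $\ddot{C}_{\mathrm{E}}(0) = -M_2/(A\ln 2)$ can be read off and substituted into \eqref{min_bit_energy} and \eqref{wideband_slope} to recover \eqref{minimum_bit_energy_wideband} and \eqref{wideband_slope_wideband} after routine algebra.

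The main obstacle is the rigorous control of the error terms under a diverging prefactor. Because $s$ and $B_{\mathrm{c}}$ are coupled by a fixed product, every higher-order remainder in the Taylor expansion of $\log_2(1+s\xi)$ is pre-multiplied by $\mu T B_{\mathrm{c}} = O(1/s)$ before entering the exponential and the expectation, so one must show that those tail terms still contribute genuinely $o(s^2)$ after integration. I would address this by carrying the expansion of $\mu T B_{\mathrm{c}}\log_2(1+s\xi)$ exactly to the order that matches the desired $s^2$ accuracy in $C_{\mathrm{E}}(s)$, and then invoking dominated convergence: the exponential damping $e^{-\alpha\xi}$, combined with the fact that all polynomial moments of $\xi$ are finite under the Nakagami-$m$ fading model assumed throughout the paper, supplies the uniform integrable majorants needed to justify interchanging the limit and the expectation. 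Once this uniformity is in hand, the mechanical expansion above delivers both \eqref{minimum_bit_energy_wideband} and \eqref{wideband_slope_wideband}.
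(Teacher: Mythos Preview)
Your proposal is correct and follows essentially the same strategy as the paper: both compute $\dot{C}_{\mathrm{E}}(0)$ and $\ddot{C}_{\mathrm{E}}(0)$ under the wideband parameterization and substitute into \eqref{min_bit_energy}--\eqref{wideband_slope}, arriving at the same intermediate expressions $\dot{C}_{\mathrm{E}}(0)=-\ln A/(\alpha\ln 2)$ and $\ddot{C}_{\mathrm{E}}(0)=-M_2/(A\ln 2)$. The execution differs only stylistically---the paper differentiates $C_{\mathrm{E}}(B_{\mathrm{c}})$ explicitly in \eqref{dot_C_E_B_c}--\eqref{ddot_C_E_B_c} and then lets $B_{\mathrm{c}}\to\infty$ (invoking $\lim_{x\to\infty}x\ln(1+a/x)=a$), whereas you organize the same calculation as a single second-order Taylor expansion in $s$; your attention to dominated convergence is, if anything, more careful than the paper's.
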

\begin{proof}
  Please refer to Appendix E for details.
\end{proof}

\begin{Remark}
  \label{remark_bounded_subchannels}
  Note that the results of the minimum bit energy for the considered system acquired in \textbf{Theorem \ref{low_power}} and in \textbf{Theorem \ref{wideband}}, i.e., ${\bar {\frac{E_{\mathrm{b}}}{N_0}}_{\min}}$ in \eqref{minimum_bit_energy_lowpower} and ${\tilde {\frac{E_{\mathrm{b}}}{N_0}}_{\min}}$ in \eqref{minimum_bit_energy_wideband}, respectively, are not identical. Specifically, we have
  \begin{align}
  \tilde {\frac{E_{\mathrm{b}}}{N_0}}_{\min} &= - \frac{\displaystyle{\frac{\mu T P}{N_0 N_{\mathrm{c}} }}}{{\mathrm{ln}} \Big( \mathbb{E} \Big\{ {\mathrm{exp}} \Big( {- \displaystyle{\frac{\mu T P}{N_0 N_{\mathrm{c}} {\mathrm{ln}}2 } } \xi} \Big) \Big\}  \Big)} \notag\\
  & \overset{(g)} \geqslant - \frac{\displaystyle{\frac{\mu T P}{N_0 N_{\mathrm{c}} }}}{\mathbb{E} \Big\{ {\mathrm{ln}} \Big( {\mathrm{exp}} \Big( {- \displaystyle{\frac{\mu T P}{N_0 N_{\mathrm{c}} {\mathrm{ln}}2 } } \xi} \Big) \Big\} \Big)} = \frac{{\mathrm{ln}}2}{\mathbb{E} \big\{ \xi \big\} } = \bar {\frac{E_{\mathrm{b}}}{N_0}}_{\min},
  \end{align}
  where ($g$) is acquired by applying the Jensen's inequality. According to the above inequality, we easily conclude that the considered system for sparse multipath fading may incur more extra energy requirements in the existence of statistical QoS limitations compared with the one for rich multipath fading, owing to multipath sparsity. As such, we conclude that considering the effective capacity potentially incurs various results at the low-power and wideband regimes, especially in the scenario with sparse multipath fading.
\end{Remark}

\begin{Remark}
  \label{wideband_no_QoS}
  By assuming that there are no QoS limitations, i.e., $\mu \rightarrow 0$, the minimum bit energy in \eqref{minimum_bit_energy_wideband} and the wideband slope in \eqref{wideband_slope_wideband} for the considered system in the wideband regime can be further expressed as
  \begin{align}
  \label{tilde_minimum_bit_energy_wideband}
  \tilde {\frac{E_{\mathrm{b}}}{N_0}}_{\min} &= \lim \limits_{\mu \rightarrow 0} \frac{\displaystyle{\frac{\mu T P}{N_0 N_{\mathrm{c}} }}}{- {\mathrm{ln}} \Big( \mathbb{E} \Big\{ {\mathrm{exp}} \Big( {- \displaystyle{\frac{\mu T P}{N_0 N_{\mathrm{c}} {\mathrm{ln}}2 } } \xi} \Big) \Big\} \Big)} \notag\\
  &\overset{(h)} = \lim \limits_{\mu \rightarrow 0} \frac{\displaystyle{\frac{\mu T P}{N_0 N_{\mathrm{c}} }}}{ {\displaystyle{\frac{\mu T P}{N_0 N_{\mathrm{c}} {\mathrm{ln}}2}}} \mathbb{E} \big\{ \xi \big\} + o \Big( \displaystyle{\frac{\mu T P}{N_0 N_{\mathrm{c}} }} \Big) } \notag\\
  & = \lim \limits_{\mu \rightarrow 0} \Bigg( {\frac{ \mathbb{E} \big\{ \xi \big\}}{{\mathrm{ln}} 2} + \frac{o \Big( \displaystyle{\frac{\mu T P}{N_0 N_{\mathrm{c}} }} \Big)}{\displaystyle{\frac{\mu T P}{N_0 N_{\mathrm{c}} }}} } \Bigg)^{-1} \notag\\
  & = \Bigg( {\frac{ \mathbb{E} \big\{ \xi \big\}}{{\mathrm{ln}} 2} + \lim \limits_{\mu \rightarrow 0} \frac{o \Big( \displaystyle{\frac{\mu T P}{N_0 N_{\mathrm{c}} }} \Big)}{\displaystyle{\frac{\mu T P}{N_0 N_{\mathrm{c}} }}} } \Bigg)^{-1} \overset{(i)} = \frac{{\mathrm{ln}}2}{\mathbb{E} \big\{ \xi \big\} }
  \end{align}
  and \eqref{tilde_wideband_slope_wideband} at the top of next page, respectively, where ($h$) and ($j$) can be obtained via exploiting the first-order Taylor series expansion of term ${- {\mathrm{ln}} \big( \mathbb{E} \big\{ {\mathrm{exp}} \big( {- {\frac{\mu T P}{N_0 N_{\mathrm{c}} {\mathrm{ln}}2 } } \xi} \big) \big\} \big)}$ for $\frac{\mu T P}{N_0 N_{\mathrm{c}} } \rightarrow 0$, while ($i$) and ($k$) can be acquired by the fact that $\lim \nolimits_{x \rightarrow 0} \frac{o ( x )}{x} = 0$. Note that we obtain the same results for the considered system as shown in \textbf{Remark \ref{low_power_no_QoS}}. As such, we conclude that the energy efficiency for the considered system can approach that for a system without QoS limitations in the wideband regime.
  \begin{figure*}[ht]
  \begin{align}
  \label{tilde_wideband_slope_wideband}
  \tilde S_0 &= \lim \limits_{\mu \rightarrow 0} 2 \Big({\displaystyle{\frac{N_0 N_{\mathrm{c}} {\mathrm{ln}}2 }{\mu T P} }}\Big)^2 {\mathbb{E} \Big\{ {\mathrm{exp}} \Big( {- \displaystyle{\frac{\mu T P}{N_0 N_{\mathrm{c}} {\mathrm{ln}}2 } } \xi} \Big) \Big\} \bigg( {{\mathrm{ln}} \Big( \mathbb{E} \Big\{ {\mathrm{exp}} \Big( {- \displaystyle{\frac{\mu T P}{N_0 N_{\mathrm{c}} {\mathrm{ln}}2 } } \xi} \Big) \Big\} \Big)} \bigg)^2} \bigg( {\mathbb{E} \Big\{ \xi^2 {\mathrm{exp}} \Big( {- \displaystyle{\frac{\mu T P}{N_0 N_{\mathrm{c}} {\mathrm{ln}}2 } } \xi} \Big) \Big\}} \bigg)^{-1} \notag\\
  &\overset{(j)} = \lim \limits_{\mu \rightarrow 0} 2 \Big({\displaystyle{\frac{N_0 N_{\mathrm{c}} {\mathrm{ln}}2 }{\mu T P} }}\Big)^2 \frac{ \Big( {\displaystyle{\frac{\mu T P}{N_0 N_{\mathrm{c}} {\mathrm{ln}}2}}} \mathbb{E} \big\{ \xi \big\} + o \Big( \displaystyle{\frac{\mu T P}{N_0 N_{\mathrm{c}} }} \Big) \Big)^2}{\mathbb{E} \big\{ \xi^2 \big\}} = \frac{2 \big( {\mathrm{ln}}2 \big)^2}{\mathbb{E} \big\{ \xi^2 \big\}} \Bigg( {\frac{ \mathbb{E} \big\{ \xi \big\}}{{\mathrm{ln}} 2} + \lim \limits_{\mu \rightarrow 0} \frac{o \Big( \displaystyle{\frac{\mu T P}{N_0 N_{\mathrm{c}} }} \Big)}{\displaystyle{\frac{\mu T P}{N_0 N_{\mathrm{c}} }}} } \Bigg)^2 \overset{(k)} = \frac{2 \big( \mathbb{E} \big\{ \xi \big\} \big)^2 }{\mathbb{E} \big\{ \xi^2 \big\}}.
  \end{align}\hrulefill\vspace*{-4mm}
  \end{figure*}

\end{Remark}

Next, we consider a special scenario for the considered system with the sufficiently large number of the IRS reflecting elements in the wideband regime. Specifically, we obtain the minimum bit energy and the wideband slope in this scenario by the following corollary.
\begin{Cor}
  \label{wideband_large_number_elements}
  By assuming that the number of the IRS reflecting elements is sufficiently large, the minimum bit energy in \eqref{minimum_bit_energy_wideband} and the wideband slope in \eqref{wideband_slope_wideband} for the considered system in the wideband regime are further approximated by
  \begin{align}
  \label{minimum_bit_energy_large_number_elements_wideband}
  \acute {\frac{E_{\mathrm{b}}}{N_0}}_{\min} &= \frac{\displaystyle{\frac{\mu T P}{N_0 N_{\mathrm{c}} }}}{\alpha_{\xi }{\mathrm{ln}} \Big( {\displaystyle{\frac{\mu T P \beta_{\xi}}{N_0 N_{\mathrm{c}} {\mathrm{ln}}2}} + 1} \Big)} {\text{ and}} \\
  \label{wideband_slope_large_number_elements_wideband}
  \acute S_0 &= 2 \frac{\alpha_{\xi} \Big( {\beta_{\xi} + {\displaystyle{\frac{N_0 N_{\mathrm{c}} {\mathrm{ln}}2 }{\mu T P} }}} \Big)^2}{\big( \alpha_{\xi} + 1 \big) \beta_{\xi}^2} \Big( {\mathrm{ln}} \Big( {\displaystyle{\frac{\mu T P \beta_{\xi}}{N_0 N_{\mathrm{c}} {\mathrm{ln}}2}} + 1} \Big) \Big)^2,
  \end{align}
  respectively, where $\alpha_{\xi}$ and $\beta_{\xi}$ are the shape and scaling parameters of a Gamma distribution for $\xi$, respectively.
\end{Cor}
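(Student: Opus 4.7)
The plan is to exploit the Gamma-distribution approximation of $\xi$ that was already invoked in the proof of \textbf{Corollary \ref{large_number_elements}} (Appendix C), and then simply reduce the expectations in \textbf{Theorem \ref{wideband}} to closed-form expressions via the Laplace transform of the Gamma density. Concretely, for sufficiently large $N$, the scalar $\xi_{\mathrm{r}} = \sum_{n\in{\mathcal{N}}}|f_n||g_n|$ is a sum of $N$ i.i.d.\ non-negative random variables, so by matching its first and second moments (and absorbing the direct term $\sqrt{\ell_{\mathrm{h}}}\xi_{\mathrm{d}}$) one obtains $\xi \sim \mathrm{Gamma}(\alpha_\xi,\beta_\xi)$ with the shape and scale parameters $\alpha_\xi$ and $\beta_\xi$ determined by the means and variances of $\xi_{\mathrm{d}}$ and $\xi_{\mathrm{r}}$ as derived in Appendix C. I would cite that derivation rather than repeating it.

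Next, I would set $s \triangleq \frac{\mu T P}{N_0 N_{\mathrm{c}} \ln 2}$ and, using the moment generating function of the Gamma distribution, evaluate
\begin{align}
\mathbb{E}\big\{e^{-s\xi}\big\} = \big(1+\beta_\xi s\big)^{-\alpha_\xi}, \qquad
\mathbb{E}\big\{\xi^2 e^{-s\xi}\big\} = \frac{\alpha_\xi(\alpha_\xi+1)\beta_\xi^2}{(1+\beta_\xi s)^{\alpha_\xi+2}},
\end{align}
the second identity following by differentiating the first twice with respect to $s$ (or by direct integration against the Gamma density, using $\int_0^\infty x^{\alpha+1}e^{-x(1/\beta+s)}\,\mathrm{d}x = \Gamma(\alpha+2)/(1/\beta+s)^{\alpha+2}$). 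Substituting the first identity into \eqref{minimum_bit_energy_wideband} gives $-\ln\mathbb{E}\{e^{-s\xi}\} = \alpha_\xi\ln(1+\beta_\xi s)$, which upon rewriting $\beta_\xi s = \frac{\mu T P\beta_\xi}{N_0 N_{\mathrm{c}}\ln 2}$ yields \eqref{minimum_bit_energy_large_number_elements_wideband} immediately.

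For the wideband slope, I would substitute both expectations into \eqref{wideband_slope_wideband}. The factor $(N_0 N_{\mathrm{c}}\ln 2/(\mu TP))^2$ equals $1/s^2$, and combining with $(1+\beta_\xi s)^{-\alpha_\xi}\cdot(1+\beta_\xi s)^{\alpha_\xi+2}/[\alpha_\xi(\alpha_\xi+1)\beta_\xi^2] = (1+\beta_\xi s)^2/[\alpha_\xi(\alpha_\xi+1)\beta_\xi^2]$ collapses the exponents $\alpha_\xi$ out of the answer. The key algebraic identity
\begin{align}
\frac{(1+\beta_\xi s)^2}{s^2} = \Big(\beta_\xi + \tfrac{1}{s}\Big)^2 = \Big(\beta_\xi + \tfrac{N_0 N_{\mathrm{c}}\ln 2}{\mu T P}\Big)^2
\end{align}
then delivers exactly the prefactor of \eqref{wideband_slope_large_number_elements_wideband}, while $\ln(1+\beta_\xi s) = \ln\big(\tfrac{\mu T P\beta_\xi}{N_0 N_{\mathrm{c}}\ln 2}+1\big)$ supplies the logarithmic factor squared.

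The only non-routine step is the first one: justifying the Gamma approximation for $\xi$ under large $N$. Since this is the same approximation already established and used for \textbf{Corollary \ref{large_number_elements}}, my plan is to invoke it directly and defer the technical details of moment-matching to Appendix C. After that, the remainder is a short and mechanical computation of Gamma-moments, so the main effort is essentially bookkeeping to rewrite the resulting expressions in the exact form given in \eqref{minimum_bit_energy_large_number_elements_wideband} and \eqref{wideband_slope_large_number_elements_wideband}.
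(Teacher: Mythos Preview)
Your proposal is essentially the same approach as the paper's (Appendix F): approximate $\xi$ by a Gamma law via moment matching, then evaluate the two expectations in \textbf{Theorem \ref{wideband}} in closed form using the Gamma integral $\int_0^\infty x^{a-1}e^{-bx}\,\mathrm{d}x=b^{-a}\Gamma(a)$, and substitute. One small inaccuracy: Appendix C does \emph{not} approximate $\xi$ itself as Gamma---it approximates $\xi_{\mathrm{r}}$ as Gamma and then computes $\mathbb{E}\{\xi\}$ and $\mathbb{E}\{\xi^2\}$ from those moments; the Gamma approximation for $\xi$ is introduced for the first time in Appendix F by applying \textbf{Lemma \ref{moment_matching}} directly to $\xi$ using those two moments, so you should cite \textbf{Lemma \ref{moment_matching}} together with \eqref{xi_1st}--\eqref{xi_2rd} rather than saying the approximation was ``already invoked'' in Appendix C.
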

\begin{proof}
  Please refer to Appendix F for details.
\end{proof}

\begin{Remark}
  \label{remark_infiniteN_wideband}
  By considering a special case with the infinite large number of the IRS reflecting elements, i.e., $N \rightarrow \infty$, we can further approximate the minimum bit energy in \eqref{minimum_bit_energy_large_number_elements_wideband} and the wideband slope in \eqref{wideband_slope_large_number_elements_wideband} for the considered system in the wideband regime as
  \begin{align}
  \acute {\frac{E_{\mathrm{b}}}{N_0}}_{\min} = 0 {\text{ and }} \acute S_0 = \infty,
  \end{align}
  respectively. Different from the results for $N \rightarrow \infty$ in the low-power regime obtained in \textbf{Remark \ref{remark_infiniteN_lower_power}}, infinite spectral efficiency performance can be achieved at infinitesimal bit energy levels in the wideband regime where the number of the independent resolvable subchannels is bounded.
\end{Remark}

Based on the above special scenario for the considered system, we further consider a practical implementation of the discrete phase shift with $b$ bit resolution. Similarly, the channel coefficient gain can be expressed as $\bar \xi = \big( \sqrt{\ell_{\mathrm{h}}} \xi_{\mathrm{d}} + \sqrt{\ell_{\mathrm{f}} \ell_{\mathrm{g}}} \bar \xi_{\mathrm{r}} \big)^2$, where $\bar \xi_{\mathrm{r}} = \sum\nolimits_{n\in {\mathcal{N}}} \big| f_n \big| \big| g_n \big| {\mathrm{exp}} \big({\jmath{\hat \theta_n}} \big)$, where $\hat \theta_n$ is the quantization error of the discrete phase shift of the $n$-th IRS reflecting element \cite{Qingqing2}. As such, the minimum bit energy and the wideband slope under the assumption of the discrete phase shifts can be obtained by the following corollary.
\begin{Cor}
  \label{discrete_phase_shifts_wideband}
  By assuming that the number of the IRS reflecting elements, i.e., $N$, is sufficiently large while the quantization error of the discrete phase shift of the $n$-th IRS reflecting element, i.e., $\hat \theta_n$, is independently and uniformly distributed in $\big[ {- \pi/2^b,\pi/2^b} \big)$, the minimum bit energy in \eqref{minimum_bit_energy_large_number_elements_wideband} and the wideband slope in \eqref{wideband_slope_large_number_elements_wideband} for the considered system in the wideband regime are further approximated by $\grave {\frac{E_{\mathrm{b}}}{N_0}}_{\min}$ and $\grave S_0$, respectively, which share the similar structures with $\acute {\frac{E_{\mathrm{b}}}{N_0}}_{\min}$ in \eqref{minimum_bit_energy_large_number_elements_wideband} and $\acute S_0$ in \eqref{wideband_slope_large_number_elements_wideband}.
\end{Cor}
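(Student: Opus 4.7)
The plan is to mirror the derivation of Corollary \ref{wideband_large_number_elements}, but with the continuous-phase channel gain $\xi$ replaced by the discrete-phase counterpart $\bar\xi = \big(\sqrt{\ell_{\mathrm{h}}}\xi_{\mathrm{d}} + \sqrt{\ell_{\mathrm{f}}\ell_{\mathrm{g}}}\bar\xi_{\mathrm{r}}\big)^2$ and the Gamma-distribution parameters $(\alpha_\xi,\beta_\xi)$ replaced by quantization-aware parameters $(\alpha_{\bar\xi},\beta_{\bar\xi})$. The starting point is the observation that the wideband expressions in Theorem \ref{wideband} depend on the channel only through the law of the squared effective gain. Hence, rerunning the argument of Appendix E on $\bar\xi$ instead of $\xi$ immediately produces discrete-phase versions of \eqref{minimum_bit_energy_wideband} and \eqref{wideband_slope_wideband}; the remaining work is to approximate the distribution of $\bar\xi$ in the large-$N$ regime.

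The main step is to match moments to a Gamma law for $\bar\xi$ in the same spirit as Appendix F (which underpins Corollary \ref{wideband_large_number_elements}) and Appendix D (which handles the discrete-phase case in the low-power regime). Using independence of $|h|$, $|f_n|$, $|g_n|$ and $\hat\theta_n$, together with
\begin{align}
\mathbb{E}\big\{{\mathrm{exp}}(\jmath\hat\theta_n)\big\} = \mathbb{E}\big\{{\mathrm{exp}}(-\jmath\hat\theta_n)\big\} = \frac{2^b}{\pi}\sin\!\Big(\frac{\pi}{2^b}\Big),
\end{align}
the first two moments $\mathbb{E}\{\bar\xi\}$ and $\mathbb{E}\{\bar\xi^2\}$ reduce to sums of products of Nakagami moments and powers of the sinc-type factor $\tfrac{2^b}{\pi}\sin(\pi/2^b)$, exactly as in \eqref{bar_xi_r_xi_r} and \eqref{xi_d_bar_xi_r_xi_d_xi_r}. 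Equating these two moments to the Gamma moments $\alpha_{\bar\xi}\beta_{\bar\xi}$ and $\alpha_{\bar\xi}(\alpha_{\bar\xi}+1)\beta_{\bar\xi}^2$ and solving gives the modified shape and scaling parameters $\alpha_{\bar\xi}$ and $\beta_{\bar\xi}$, which collapse to $\alpha_\xi$ and $\beta_\xi$ as $b\to\infty$.

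The final step is simply to substitute $(\alpha_{\bar\xi},\beta_{\bar\xi})$ into the formulas obtained by applying Theorem \ref{wideband} to $\bar\xi$, yielding
\begin{align}
\grave{\frac{E_{\mathrm{b}}}{N_0}}_{\min} &= \frac{\displaystyle{\frac{\mu T P}{N_0 N_{\mathrm{c}}}}}{\alpha_{\bar\xi}{\mathrm{ln}}\Big({\displaystyle{\frac{\mu T P\beta_{\bar\xi}}{N_0 N_{\mathrm{c}}{\mathrm{ln}}2}}+1}\Big)}, \\
\grave S_0 &= 2\,\frac{\alpha_{\bar\xi}\Big({\beta_{\bar\xi}+\displaystyle{\frac{N_0 N_{\mathrm{c}}{\mathrm{ln}}2}{\mu T P}}}\Big)^2}{(\alpha_{\bar\xi}+1)\beta_{\bar\xi}^2}\Big({\mathrm{ln}}\Big({\displaystyle{\frac{\mu T P\beta_{\bar\xi}}{N_0 N_{\mathrm{c}}{\mathrm{ln}}2}}+1}\Big)\Big)^2,
\end{align}
which are structurally identical to \eqref{minimum_bit_energy_large_number_elements_wideband} and \eqref{wideband_slope_large_number_elements_wideband}. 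The main obstacle I anticipate is justifying the Gamma approximation for $\bar\xi$ itself: because $\bar\xi_{\mathrm{r}}$ is a complex-valued sum of terms whose random phases $\hat\theta_n$ are small but nonzero, one must verify that, for large $N$, the real part of $\bar\xi_{\mathrm{r}}$ dominates and that the resulting squared magnitude still concentrates around a Gamma law; this can be argued via the same central-limit / moment-matching reasoning used in Appendix D, noting that the quantization factor $\tfrac{2^b}{\pi}\sin(\pi/2^b)\in(0,1]$ only rescales the coherent contribution without disturbing the asymptotic shape. Everything else reduces to algebraic substitution and is routine.
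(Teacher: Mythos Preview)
Your proposal is correct and is essentially the same approach the paper takes: the paper's proof simply says the argument ``similarly follows Appendix D,'' i.e., compute the quantization-aware moments of $\bar\xi_{\mathrm r}$ (hence of $\bar\xi$) using the sinc factor, moment-match to a Gamma law, and then rerun the Appendix~F computation that produced \eqref{minimum_bit_energy_large_number_elements_wideband}--\eqref{wideband_slope_large_number_elements_wideband} with $(\alpha_{\bar\xi},\beta_{\bar\xi})$ in place of $(\alpha_{\xi},\beta_{\xi})$. Your anticipated ``obstacle'' about the complex-valued $\bar\xi_{\mathrm r}$ is already handled exactly as in Appendix~D, so nothing further is needed.
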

\begin{proof}
  The proof of the corollary similarly follows Appendix D. Here, we omit the details of the proof due to the space limitation of this paper.
\end{proof}

\begin{Remark}
  Note that the above results of the minimum bit energy and the wideband slope for the considered system with the discrete nature of IRS reflecting elements in the wideband regime can gradually approach the results in \eqref{minimum_bit_energy_large_number_elements_wideband} and \eqref{wideband_slope_large_number_elements_wideband} obtained in \textbf{Corollary \ref{wideband_large_number_elements}}, respectively, as the bit resolution $b \rightarrow \infty$ that represents the continuous phase shifts without quantization errors, which can be explained by the fact that $\lim \nolimits_{b \rightarrow \infty} \frac{2^b}{\pi} \sin \big( \frac{\pi}{2^b} \big) = 1$.
\end{Remark}

The above analysis holds for case I where the number of the independent resolvable subchannels is bounded, $B_{\mathrm{c}}$ grows linearly while $N_{\mathrm{c}}$ remains constant with increasing bandwidth $B$. Now, we analyze energy efficiency for case II where the number of the independent resolvable subchannels grows but only sublinearly, both $B_{\mathrm{c}}$ and $N_{\mathrm{c}}$ increase without bound as system bandwidth $B$ increases. In particular, we can obtain the minimum bit energy and the wideband slope for case II in the scenario with sparse multipath fading by the following theorem.
\begin{Thm}
  \label{N_c_infty_wideband}
  In the scenario with sparse multipath fading where the number of the independent resolvable subchannels and the bandwidth of each subchannel increase sublinearly with increasing system bandwidth, i.e., $B_{\mathrm{c}} \rightarrow \infty$ and $N_{\mathrm{c}} \rightarrow \infty$, respectively, the minimum bit energy and the wideband slope for the considered system in the wideband regime, respectively, are expressed as
  \begin{align}
  \label{B_c_N_c_infty}
  \tilde {\frac{E_{\mathrm{b}}}{N_0}}_{\min} = \frac{{\mathrm{ln}}2}{\mathbb{E} \big\{ \xi \big\} } {\text{ and }} \tilde S_0 &= \frac{2 \big( \mathbb{E} \big\{ \xi \big\} \big)^2 }{\mathbb{E} \big\{ \xi^2 \big\}}.
  \end{align}
\end{Thm}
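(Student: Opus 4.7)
The plan is to build directly on \textbf{Theorem \ref{wideband}} and then let $N_c \to \infty$ by Taylor expansion, in the same spirit as the $\mu \to 0$ reduction carried out in \textbf{Remark \ref{wideband_no_QoS}}. Because $N_c$ is only allowed to grow sublinearly with $B$, the per-subchannel bandwidth $B_c = B / N_c$ still satisfies $B_c \to \infty$, so all the derivations that produced \eqref{minimum_bit_energy_wideband} and \eqref{wideband_slope_wideband}---in particular the replacement $\mu T B_c \log_2\big(1 + \tfrac{P\xi}{N_0 N_c B_c}\big) \to \tfrac{\mu T P\xi}{N_0 N_c \ln 2}$---remain valid for any $N_c$, and I can treat those two expressions as the starting point, now viewed as functions of $N_c$ rather than as final answers.

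Introducing the shorthand $a \triangleq \tfrac{\mu T P}{N_0 N_c \ln 2}$, which satisfies $a \to 0$ as $N_c \to \infty$, I would first substitute the first-order Taylor expansion $\mathbb{E}\big\{\exp(-a\xi)\big\} = 1 - a\,\mathbb{E}\{\xi\} + o(a)$ into \eqref{minimum_bit_energy_wideband} and apply $\ln(1 + y) = y + o(y)$, so that $\ln \mathbb{E}\big\{\exp(-a\xi)\big\} = -a\,\mathbb{E}\{\xi\} + o(a)$. The numerator factor $\tfrac{\mu T P}{N_0 N_c} = a \ln 2$ then cancels the leading $-a\,\mathbb{E}\{\xi\}$ in the denominator, and passing to the limit $a \to 0$ yields $\tilde{\frac{E_{\mathrm{b}}}{N_0}}_{\min} = \ln 2 / \mathbb{E}\{\xi\}$, which is the first half of \eqref{B_c_N_c_infty}.

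For the wideband slope I would keep one additional order in the expansions: as $a \to 0$, $\mathbb{E}\big\{\exp(-a\xi)\big\} \to 1$, $\mathbb{E}\big\{\xi^2 \exp(-a\xi)\big\} \to \mathbb{E}\{\xi^2\}$, and $\big(\ln \mathbb{E}\big\{\exp(-a\xi)\big\}\big)^2 = a^2 (\mathbb{E}\{\xi\})^2 + o(a^2)$. Substituting these into \eqref{wideband_slope_wideband}, the prefactor $\big(\tfrac{N_0 N_c \ln 2}{\mu T P}\big)^2 = a^{-2}$ exactly cancels the $a^2$ from the squared logarithm, leaving $\tilde S_0 = 2 (\mathbb{E}\{\xi\})^2 / \mathbb{E}\{\xi^2\}$ in the limit, matching the second half of \eqref{B_c_N_c_infty}. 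The entire derivation closely parallels the manipulations labeled $(h)$--$(k)$ in \textbf{Remark \ref{wideband_no_QoS}}, with the small parameter now being $1/N_c$ rather than $\mu$.

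The main---and essentially only---obstacle I anticipate is the careful bookkeeping of the Taylor remainders, to make sure that the $o(a)$ and $o(a^2)$ terms truly vanish after being multiplied by the $a^{-1}$ and $a^{-2}$ prefactors sitting in front of them. Interchange of the limit with the expectation is justified by dominated convergence, using $\exp(-a\xi) \leq 1$ and $\xi^2 \exp(-a\xi) \leq \xi^2$ for $\xi \geq 0$ together with the existence of the first two moments of $\xi$ under the Nakagami fading model, so no substantively new technique is needed beyond what already underlies \textbf{Theorem \ref{wideband}} and \textbf{Remark \ref{wideband_no_QoS}}.
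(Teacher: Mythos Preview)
Your proposal is correct and follows essentially the same approach as the paper: the paper's proof simply instructs one to take $N_{\mathrm{c}} \rightarrow \infty$ in the results of \textbf{Theorem \ref{wideband}} and then simplify via the same first-order Taylor expansion method used in \textbf{Remark \ref{wideband_no_QoS}}, which is precisely what you outline (with the small parameter $a = \tfrac{\mu T P}{N_0 N_{\mathrm{c}} \ln 2} \to 0$ playing the role previously played by $\mu \to 0$). Your additional care about the $o(a)$, $o(a^2)$ remainders and the dominated-convergence justification only makes the argument more rigorous than the paper's brief sketch.
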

\begin{proof}
  The minimum bit energy and the wideband slope can be rewritten by employing $N_{\mathrm{c}} \rightarrow \infty$ in the results of \textbf{Theorem \ref{wideband}} and further simplified via adopting the similar method as shown in \textbf{Remark \ref{wideband_no_QoS}}. Here, we omit the details of the proof due to the space limitation of this paper.
\end{proof}

\begin{Remark}
  In the wideband regime, the expressions of the minimum bit energy for rich multipath fading where the number of subchannels increases linearly and sparse multipath fading where the number of subchannels and the bandwidth of each subchannel increase sublinearly with increasing bandwidth (i.e., case II) are identical. Note that the number of the independent resolvable subchannels is regarded as a measure of the degrees-of-freedom (DoFs) for the considered system. On the one hand, the value of the minimum bit energy for sparse multipath fading where the number of subchannels remains constant as the system bandwidth increases (i.e., case I) is bounded when the DoFs is bounded with increasing bandwidth as shown in \textbf{Remark \ref{remark_bounded_subchannels}}. On the other hand, when the DoFs grow with the system bandwidth, the value of the minimum bit energy for sparse multipath fading (i.e., case II) is same as that for rich multipath fading, despite the fact that the number of subchannels grows sublinearly.
\end{Remark}

\begin{Remark}
  Note that the minimum bit energy and the wideband slope acquired in \textbf{Theorem \ref{N_c_infty_wideband}} for $N_{\mathrm{c}} \rightarrow \infty$ and fixed $\mu$ is equivalent to that analyzed in \textbf{Remark \ref{wideband_no_QoS}} for $\mu \rightarrow 0$ and fixed $N_{\mathrm{c}}$. In other words, the energy efficiency gains can be achieved by exploiting the results in \eqref{B_c_N_c_infty} when $\mu = 0$, even though the DoFs provided by the independent resolvable subchannels is bounded with increasing bandwidth. As such, we conclude that the performance for the considered system unaffected by the multipath sparsity can approach that for a system in sparse multipath fading without QoS limitations.
\end{Remark}

\section{Simulation Results}

\subsection{Simulation Setup}

\begin{figure}[t]
\centering
\includegraphics[width=3in]{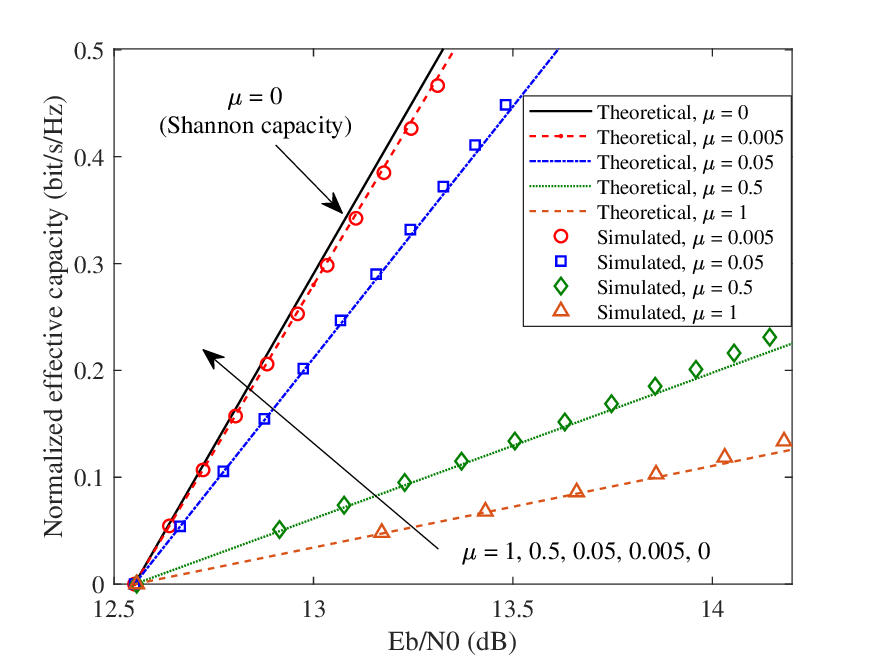}
\caption{The normalized effective capacity (bit/s/Hz) versus the bit energy (dB) for the considered system in the low-power regime with $N=100$ and $B = 10^5$ Hz.}\vspace{-1em}
\label{fig:lowpower_IRS}
\end{figure}

In this section, we aim to numerically demonstrate the above analytical results. For simulation, we consider an IRS-aided wireless communication system, where a transmitter, a receiver, and an IRS are located at $\left( 0,0 \right)$, $\left( 10,0 \right)$ m, and  $\left( 5,10 \right)$ m, respectively. For the large-scale fading, the distance-dependent path loss model is expressed as $\ell = {L_0}d^{ - \vartheta}$, where ${L_0}$, $d$, and $\vartheta$ represent the path loss at the reference distance of $1$ m, the individual link distance, and the corresponding path loss exponent, respectively. Specifically, the path loss exponent for the transmitter-receiver link is set to $3.6$, while the counterparts for the transmitter-IRS and IRS-receiver links are set to $2.2$. Also, ${L_0}$ is set to $-30$ dB. For the small-scale fading, the fading coefficients of ${\mathbf{g}}$, ${\mathbf{f}}$, and $h$ follow Nakagami-$m$ distribution with the corresponding shape parameters, i.e., $m_{\mathrm{g}} = 1$, $m_{\mathrm{f}} = 1$, and $m_{\mathrm{h}} = 2$, respectively. The number of the channel realizations is set to $10^3$. The block duration is set to $T = 2$ ms.

\subsection{Spectral Efficiency versus Energy Efficiency in Low-Power Regime}

Fig. \ref{fig:lowpower_IRS} demonstrates the normalized effective capacity versus the bit energy for the considered system in the low-power regime. As expected, all the curves have the same value of the minimum bit energy that is not effected by the statistical QoS limitations, which is in alignment with the analysis result in \textbf{Theorem \ref{low_power}}. Note that the minimum bit energy is large due to the existence of the large-scale fading. Furthermore, we observe that the effective capacity is identical to the traditional Shannon capacity when there are no QoS limitations, i.e., $\mu=0$. As can be observed, for the given energy requirement, the normalized effective capacity gradually decreases as the queueing constraints become more strict. This can be intuitively explained by the fact that the arrival rates supported by the considered system decline with the increasingly more stringent buffer constraints, which leads to lower departure rates. Besides, it is noted that the wideband slope for the considered system in the low-power regime diminishes as $\mu$ grows. This is because the second derivative of the normalized effective capacity at ${\mathrm{SNR}} = 0$ in \eqref{ddot_C_E} is a monotonically decreasing function with increasing QoS exponent $\mu$. As such, we conclude that the existence of QoS limitations can incur spectral efficiency degradation or more additional energy cost at low but nonzero SNR levels.

\begin{figure}[t]
\centering
\includegraphics[width=3in]{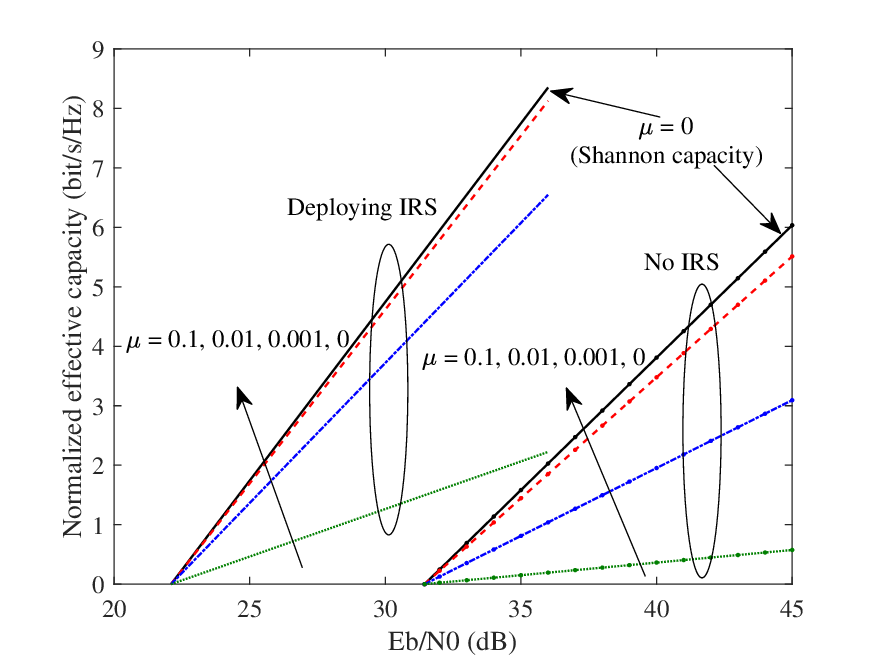}
\caption{The normalized effective capacity (bit/s/Hz) versus the bit energy (dB) for comparison between the considered system and the non-IRS system in the low-power regime with $N=60$ and $B = 10^5$ Hz.}\vspace{-1em}
\label{fig:lowpower_no_IRS}
\end{figure}

Fig. \ref{fig:lowpower_no_IRS} studies the normalized effective capacity versus the bit energy for comparison between the considered system and the non-IRS system in the low-power regime. We observe that the minimum bit energy for the considered system is smaller than that for the non-IRS system, no matter whether there is the QoS limitations or not. This indicates the superiority of improving the system energy efficiency by introducing IRS. Furthermore, we note that the wideband slope for the considered system is larger than that for the non-IRS system under the same QoS constraints. This can achieve significant system performance gaps between these two systems in terms of energy saving and spectral efficiency improvement, especially under the strict QoS limitations. For instance, for achieving $0.5$ bit/s/Hz/ of spectral efficiency under $\mu = 0.1$, the considered system only requires $25.5$ dB of bit energy, while the non-IRS system requires roughly $43.5$ dB. The corresponding performance gap is tremendous thanks to the higher power gains provided by IRS. On the other hand, the IRS-aided system can achieve higher performance gain by $93 \%$ compared with the non-IRS system when the bit energy requirement is $35$ dB under $\mu = 0.1$, due to exploiting the extra DoFs introducing by IRS.  As such, we conclude that deploying IRS is able to compensate the spectral efficiency degradation or fulfil stringent energy requirements caused by the strict QoS constraints at low but nonzero SNR levels.

\begin{figure}[t]
\centering
\includegraphics[width=3in]{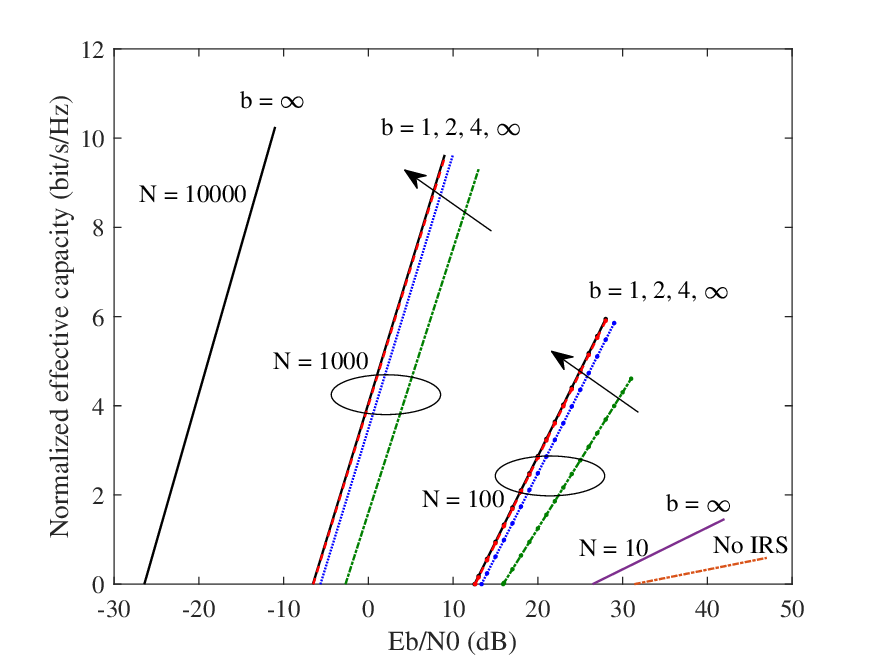}
\caption{The normalized effective capacity (bit/s/Hz) versus the bit energy (dB) for the considered system by deploying the various number of the IRS reflecting elements in the low-power regime with $\mu=0.1$ and $B = 10^5$ Hz.}\vspace{-1em}
\label{fig:lowpower_IRS_element}
\end{figure}

Fig. \ref{fig:lowpower_IRS_element} reveals the normalized effective capacity versus the bit energy for the considered system by deploying the various number of the IRS reflecting elements in the low-power regime. Note that the value of the minimum bit energy for the considered system decreases gradually with the increasing number of the IRS reflecting elements, as the IRS can reflect additional signal power received from the transmitter facilitating the acquisition of  a higher power gain. Furthermore, it is observed that the wideband slope for the considered system increase monotonically in the growth of the number of IRS elements, since the system can exploit the additional DoFs offered by the more IRS reflecting elements to compensate the spectral efficiency degradation caused by the strict QoS constraints. More importantly, the wideband slope becomes saturated to $2$ bit/s/Hz/($3$ dB) when the number of the IRS reflecting elements is sufficiently large, which is in alignment with the analysis result in \textbf{Remark \ref{remark_infiniteN_lower_power}}. Moreover, we note that the system performance of the IRS with a limited bit resolution is able to approach that of the upper bound achieved by adopting an infinite bit resolution. Besides, as can be observed, the effect of the quantization error of the phase shifts on the wideband slope for the considered system alleviates in the case of the excessive number of IRS reflecting elements. As such, we conclude that deploying the sufficiently large number of the practical IRS reflecting elements can significantly decline the energy consumption for the required spectral efficiency performance in the low-SNR regime, even with limited bit-resolution phase shifters.

\subsection{Spectral Efficiency versus Energy Efficiency in Wideband Regime}

\begin{figure}[t]
\centering
\includegraphics[width=3in]{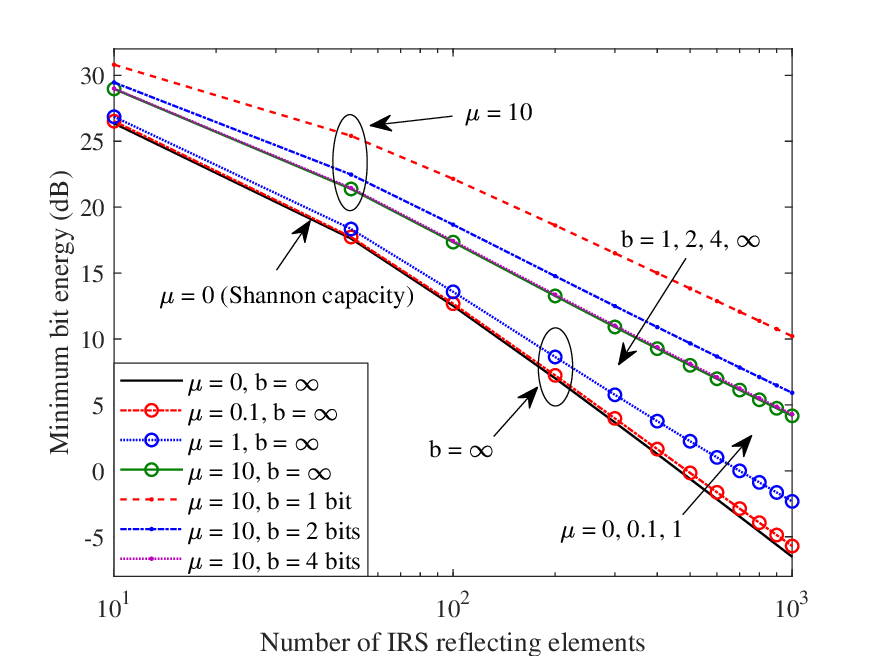}
\caption{The minimum bit energy (dB) versus the number of IRS reflecting elements for case I in the scenario with sparse multipath fading in the wideband regime with $N_{\mathrm{c}}=5$ and $P/N_0 = 10^6$.}\vspace{-1em}
\label{fig:wideband_IRS_element}
\end{figure}

\begin{figure*}[t]
\centering
\subfigure[case I with $N_{\mathrm{c}}=5$ and increasing $B_{\mathrm{c}}$.]{
\label{fig:wideband_case_I}
\includegraphics[width=3in]{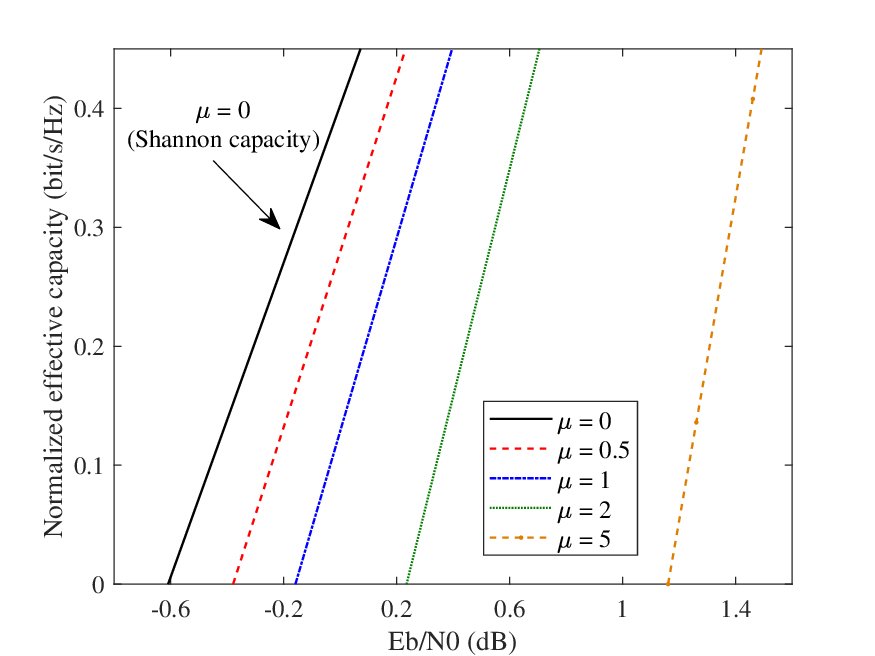}}
\hspace{0.5in}
\subfigure[case II with increasing $N_{\mathrm{c}}$ and $B_{\mathrm{c}}$.]{
\label{fig:wideband_case_II}
\includegraphics[width=3in]{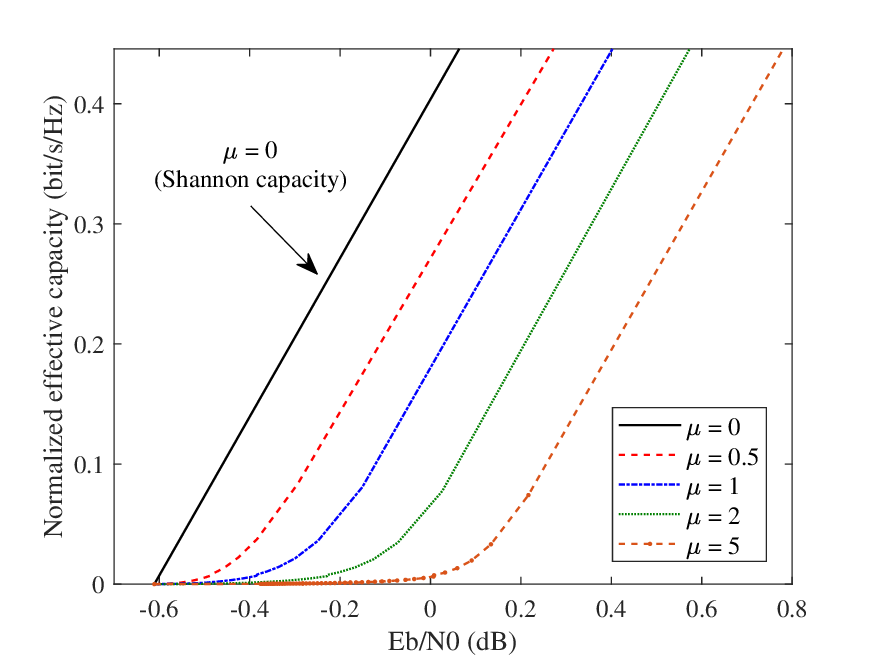}}
\caption{\footnotesize \setstretch{1.5}The normalized effective capacity (bit/s/Hz) versus the bit energy (dB) for a) case I and b) case II in the scenario with sparse multipath fading in the wideband regime with $N=500$ and $P/N_0 = 10^5$.}\vspace{-0.5em}
\end{figure*}

Fig. \ref{fig:wideband_IRS_element} studies the minimum bit energy versus the number of IRS reflecting elements for case I in the scenario with sparse multipath fading in the wideband regime, where the numbers of the independent resolvable subchannels remains constant, i.e., $N_{\mathrm{c}}=5$, while the bandwidth of each subchannel, i.e., $B_{\mathrm{c}}$, increases sublinearly with increasing system bandwidth. We observe that the minimum bit energy requirements for the considered system become relaxed gradually with the increasing number of IRS reflecting elements, since introducing more extra reflecting elements of the IRS can achieve the higher power gains at the user. Additionally, we observe that the minimum bit energy for the considered system grows gradually with increasing QoS exponent $\mu$, due to the existence of more additional energy consumption incurred by more strict QoS constraints. Besides, for given QoS constraints, the minimum bit energy requirement for the considered system with the deployment of the limited bit-resolution phase shifters can be more stringent in the wideband regime compared to that with implementing the continuous phase shifts. This is because the discrete nature of the phase shifts may lead to substantial energy penalty for the required effective capacity in the scenario with sparse multipath fading.

Fig. \ref{fig:wideband_case_I} demonstrates the normalized effective capacity versus the bit energy for case I in the scenario with sparse multipath fading in the wideband regime, where the numbers of the independent resolvable subchannels remains constant, i.e., $N_{\mathrm{c}}=5$, while the bandwidth of each subchannel, i.e., $B_{\mathrm{c}}$, increases sublinearly with increasing system bandwidth. As can be observed, the minimum bit energy for the considered system grows gradually with increasing QoS exponent $\mu$, which is different from the simulation results as shown in Fig. \ref{fig:lowpower_IRS}. This indicates that the considered system for sparse multipath fading may incur more extra energy requirements in the existence of statistical QoS constraints compared with the one for rich multipath fading, owing to multipath sparsity. Moreover, we note that the wideband slope for the considered system increases monotonically as $\mu$ grows, which means that the increment of the bit energy required for improving the given system performance is more slightly when $\mu$ is larger. Yet, the absolute bit energy requirements for a given spectral efficiency are still strict, especially under the stringent QoS limitations. Besides, it is noted that when the bit-resolution of the phase shifters is limited, the minimum bit energy for the considered system becomes large in the wideband regime. As such, we conclude that the existence of the quantization error of the phase shifts may incur significant energy penalty for the required spectral efficiency performance at low SNR levels, especially in the existence of the strict QoS limitations.

Fig. \ref{fig:wideband_case_II} unveils the normalized effective capacity versus the bit energy for case II in the scenario with sparse multipath fading in the wideband regime, where the number of the independent resolvable subchannels and the bandwidth of each subchannel increase sublinearly with increasing system bandwidth, i.e., $N_{\mathrm{c}}$ increases from $5$ to $50$ as $B_{\mathrm{c}}$  grows from $10$ kHz to $10$ MHz, respectively. As expected, all the curves have the same value of the minimum bit energy for sparse multipath fading that is unaffected by the statistical QoS limitations, which is same as that for rich multipath fading as shown in \textbf{Theorem \ref{N_c_infty_wideband}}, thanks to the DoFs provided by the substantial independent resolvable subchannels with the increasing system bandwidth. Nonetheless, it is observed that approaching the minimum bit energy in the curves with $\mu > 0$ is slow due to the zero wideband slopes, which unveils that it is highly challenging to attain the value of the minimum bit energy by increasing system bandwidth as long as there are QoS limitations in IRS-aided wireless systems.

\section{Conclusion}

In this paper, we analyzed the energy efficiency of the IRS-aided wireless communication systems under certain statistical QoS constraints. Specifically, we employed the effective capacity formulation to measure the throughput performance for the considered system at the bit energy levels. We determined the expressions of the minimum bit energy and investigated the tradeoff between spectral efficiency and energy efficiency in the low-power and wideband regimes under such QoS constraints. In the low-power regime, we demonstrated that the minimum bit energy requirement for the considered system under statistical QoS limitations is identical to that required in the absence of such limitations. Also, we showed that deploying the sufficiently large number of the practical IRS reflecting elements can significantly diminish the energy consumption for the required spectral efficiency performance at low but nonzero SNR levels, even with limited bit-resolution phase shifters. In the wideband regime, we revealed that the results acquired in the low-power regime is also applicable to the scenario with rich multipath fading. Moreover, we proved that the expression of the minimum bit energy for rich multipath fading is identical to that for sparse multipath fading where the number of subchannels grows sublinearly with the increasing bandwidth. Besides, we unveiled that compared with the results applied to the low-power regime, higher effective capacity performance can be achieved in the scenario with sparse multipath fading where the number of the independent resolvable subchannels is bounded while satisfying the same minimum bit energy requirement in the wideband regime.

\appendix
\begin{appendices}

\subsection{Proof of Lemma \ref{optimal_theta}}

By fixing the transmit SNR, the optimization problem in \eqref{C_E} can be simplified to
\begin{align}
  \label{p1}
  &\mathop {{\mathrm{maximize}}}\limits_{\mathbf{\Theta}} {\text{  }} \big| \sqrt{\ell_{\mathrm{f}} \ell_{\mathrm{g}}} {\mathbf{f}}^T {\mathbf{\Theta }} {\mathbf{g}} + \sqrt{\ell_{\mathrm{h}}} h \big|^2 \notag\\
  &{\mathrm{s.t.}}{\text{  }} \big| \theta_m \big|=1, \ \forall m \in {\mathcal{M}}.
\end{align}
Note that the problem in \eqref{p1} admits a closed-form solution thanks to the special structure of the objective function that satisfies the following inequality, which is given by
\begin{align}
  \label{triangle_inequality}
  \big| \sqrt{\ell_{\mathrm{f}} \ell_{\mathrm{g}}} {\mathbf{f}}^T {\mathbf{\Theta }} {\mathbf{g}} + \sqrt{\ell_{\mathrm{h}}} h \big| \leqslant \sqrt{\ell_{\mathrm{f}} \ell_{\mathrm{g}}} \big| {\mathbf{f}}^T {\mathbf{\Theta }} {\mathbf{g}} \big| + \sqrt{\ell_{\mathrm{h}}} \big| h \big|,
\end{align}
where the corresponding equality holds if and only if $\arg\big( {\mathbf{f}}^T {\mathbf{\Theta }} {\mathbf{g}} \big) = \arg\big( h \big)$ based on the triangle inequality. In the following, we aim to demonstrate that there is always a solution ${\mathbf{\Theta }}$ for problem in \eqref{p1}, which satisfies the corresponding phase shift constraints and the equality in \eqref{triangle_inequality}. In particular, we first rewrite term ${\mathbf{f}}^T {\mathbf{\Theta }} {\mathbf{g}}$ in \eqref{p1} as ${\mathbf{v}}^H {\mathrm{diag}} \big( {\mathbf{f}}^T \big) {\mathbf{g}}$, where ${\mathbf{v}} = \big[ {e^{\jmath{\theta _1}}, \ldots, e^{\jmath{\theta _M}}} \big]^H \in {\mathbb{C}^{M \times 1}}$. Then, the problem in \eqref{p1} can be equivalently reformulated as
\begin{align}
  \label{p2}
  &\mathop {{\mathrm{maximize}}}\limits_{\mathbf{v}} {\text{  }} \big| {\mathbf{v}}^H {\mathrm{diag}} \big( {\mathbf{f}}^T \big) {\mathbf{g}} \big|^2 \notag\\
  &{\mathrm{s.t.}}{\text{  }} \big| v_m \big|=1, \ \forall m \in {\mathcal{M}}, \notag\\
  &\ \ \ \ \ \arg\big( {\mathbf{v}}^H {\mathrm{diag}} \big( {\mathbf{f}}^T \big) {\mathbf{g}} \big) = \arg\big( h \big),
\end{align}
where $v_m$ is the $m$-th element of optimization vector ${\mathbf{v}}$. Obviously, we can easily acquire the optimal solution for the problem in \eqref{p2}, i.e., ${\mathbf{v}}^{\star} = {\mathrm{exp}} \big( \jmath \arg( h ) - \jmath \arg( {\mathrm{diag}} ( {\mathbf{f}}^T ) {\mathbf{g}} ) \big)$. This completes the proof of the proposition.

\subsection{Proof of Theorem \ref{low_power}}

The first and second derivatives of the normalized effective capacity with respect to SNR in \eqref{normalized_C_E} are given by \eqref{dot_C_E_SNR} and \eqref{ddot_C_E_SNR} at the top of next page,
\begin{figure*}[ht]
\begin{align}
  \label{dot_C_E_SNR}
  \dot{C}_{\mathrm{E}} \big( {\mathrm{SNR}} \big) &= \frac{1}{{\mathrm{ln}}2} \bigg( {\mathbb{E} \Big\{ {\mathrm{exp}} \Big( {- \mu T B {\mathrm{log}}_2 \big( 1 + {\mathrm{SNR}} \xi \big) } \Big) \Big\} } \bigg)^{-1} \mathbb{E} \bigg\{ \xi \Big( { 1 + {\mathrm{SNR}} \xi } \Big)^{-1} {\mathrm{exp}} \Big( {- \mu T B {\mathrm{log}}_2 \big( 1 + {\mathrm{SNR}} \xi \big) } \Big) \bigg\}. \\
  \label{ddot_C_E_SNR}
  \ddot{C}_{\mathrm{E}} \big( {\mathrm{SNR}} \big) &= \frac{\mu T B}{( {\mathrm{ln}}2 )^2} \bigg( \mathbb{E} \bigg\{ \xi \Big( { 1 + {\mathrm{SNR}} \xi } \Big)^{-1} {\mathrm{exp}} \Big( {- \mu T B {\mathrm{log}}_2 \big( 1 + {\mathrm{SNR}} \xi \big) } \Big) \bigg\} \bigg)^2 - \bigg( \frac{\mu T B}{( {\mathrm{ln}}2 )^2} + \frac{1}{{\mathrm{ln}}2} \bigg) \notag\\
  & \ \ \ \ \times \bigg( {\mathbb{E} \Big\{ {\mathrm{exp}} \Big( {- \mu T B {\mathrm{log}}_2 \big( 1 + {\mathrm{SNR}} \xi \big) } \Big) \Big\} } \bigg)^{-1} \mathbb{E} \bigg\{ \xi^2 \Big( { 1 + {\mathrm{SNR}} \xi } \Big)^{-2} {\mathrm{exp}} \Big( {- \mu T B {\mathrm{log}}_2 \big( 1 + {\mathrm{SNR}} \xi \big) } \Big) \bigg\}.
\end{align}\hrulefill\vspace*{-4mm}
\end{figure*}
respectively. As such, the first and second derivatives of the normalized effective capacity at ${\mathrm{SNR}} = 0$ are expressed as
\begin{align}
  \label{dot_C_E}
  \dot{C}_{\mathrm{E}} \big( 0 \big) &= \frac{\mathbb{E} \big\{ \xi \big\} }{{\mathrm{ln}}2} {\text{ and}} \\
  \label{ddot_C_E}
  \ddot{C}_{\mathrm{E}} \big( 0 \big) &= \frac{\mu T B}{( {\mathrm{ln}}2 )^2} \Big( \mathbb{E} \big\{ \xi \big\} \Big)^2 - \bigg( \frac{\mu T B}{( {\mathrm{ln}}2 )^2} + \frac{1}{{\mathrm{ln}}2} \bigg) \mathbb{E} \big\{ \xi^2 \big\},
\end{align}
respectively. By substituting \eqref{dot_C_E} into \eqref{min_bit_energy} and substituting \eqref{dot_C_E}--\eqref{ddot_C_E} into \eqref{wideband_slope}, respectively, we acquire the desired results. This completes the proof of the theorem.

\subsection{Proof of Corollary \ref{large_number_elements}}

To further describe the minimum bit energy in \eqref{minimum_bit_energy_lowpower} and the wideband slope in \eqref{wideband_slope_lowpower} for the considered system in the low-power regime, we need to derive the distribution of $\xi = \big( \sqrt{\ell_{\mathrm{h}}} \xi_{\mathrm{d}} + \sqrt{\ell_{\mathrm{f}} \ell_{\mathrm{g}}} \xi_{\mathrm{r}} \big)^2$, where $\xi_{\mathrm{d}} = |h|$ and $\xi_{\mathrm{r}} = \sum\nolimits_{n\in {\mathcal{N}}} |f_n| |g_n|$. In fact, we note that the exact distribution of $\xi$ is hardly acquired in the tractable closed form, since it contains the sum of the products of Nakagami-$m$ random variables, i.e., $\xi_{\mathrm{r}}$. To handle this issue, we aim to approximate the distribution of $\xi_{\mathrm{r}}$ as a Gamma distribution by applying the following lemma.
\begin{Lem}
  \label{moment_matching}
  (Moment Matching \cite{Tahir}) For a non-negative random variable $Y$ with the first moment $\mathbb{E} \big\{ Y \big\}$ and the second moment $\mathbb{E} \big\{ Y^2 \big\}$, the random variable $Y$ can be approximated as a Gamma distribution, i.e., $Y \sim \mathrm{Gamma} \big( \alpha_Y,\beta_Y \big)$, where $\alpha_Y$ and $\beta_Y$ are shape and scaling parameters, respectively, which are given by
  \begin{align}
  \label{shape_scaling}
  \alpha_Y = \frac{\big( \mathbb{E} \big\{ Y \big\} \big)^2}{\mathbb{E} \big\{ Y^2 \big\} - \big( \mathbb{E} \big\{ Y \big\} \big)^2} {\text{ and }} \beta_Y = \frac{\mathbb{E} \big\{ Y^2 \big\} - \big( \mathbb{E} \big\{ Y \big\} \big)^2}{\mathbb{E} \big\{ Y \big\}}.
  \end{align}
\end{Lem}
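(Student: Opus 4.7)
The plan is to prove Lemma \ref{moment_matching} by a direct method-of-moments calculation: the Gamma family is a two-parameter family, so the approximation $Y \sim \mathrm{Gamma}(\alpha_Y,\beta_Y)$ is fully determined once we insist that the first two moments of the approximating distribution match those of $Y$. The main work is algebraic; there is no probabilistic heavy lifting required by the statement itself (the lemma only asserts a moment-matched Gamma fit, not a convergence-in-distribution claim).

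First I would recall the standard moments of the Gamma distribution: if $Z \sim \mathrm{Gamma}(\alpha,\beta)$ with density $f_Z(z)=\frac{1}{\Gamma(\alpha)\beta^{\alpha}}z^{\alpha-1}\exp(-z/\beta)$ for $z>0$, then $\mathbb{E}\{Z\}=\alpha\beta$ and $\mathrm{Var}(Z)=\alpha\beta^{2}$, so that $\mathbb{E}\{Z^{2}\}=\alpha\beta^{2}+\alpha^{2}\beta^{2}=\alpha(\alpha+1)\beta^{2}$. These identities follow immediately from the definition of the Gamma function $\Gamma(x)=\int_{0}^{\infty}t^{x-1}\exp(-t)\,dt$ introduced in the Notations paragraph, by the substitution $t=z/\beta$.

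Next I would impose the two matching equations
\begin{align}
\alpha_Y \beta_Y &= \mathbb{E}\{Y\}, \\
\alpha_Y \beta_Y^{2} &= \mathbb{E}\{Y^{2}\}-\big(\mathbb{E}\{Y\}\big)^{2}.
\end{align}
This is a $2\times 2$ nonlinear system in $(\alpha_Y,\beta_Y)$ but becomes linear after one elimination: dividing the second equation by the first yields $\beta_Y = \big(\mathbb{E}\{Y^{2}\}-(\mathbb{E}\{Y\})^{2}\big)/\mathbb{E}\{Y\}$, which is exactly the scaling parameter claimed in \eqref{shape_scaling}. Substituting this $\beta_Y$ back into the first equation produces $\alpha_Y=(\mathbb{E}\{Y\})^{2}/\big(\mathbb{E}\{Y^{2}\}-(\mathbb{E}\{Y\})^{2}\big)$, matching the shape parameter in \eqref{shape_scaling}.

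The only subtlety, and what I would regard as the main (though mild) obstacle, is verifying that the formulas make sense: we need $\alpha_Y>0$ and $\beta_Y>0$ for a genuine Gamma density. Non-negativity of $Y$ with $\mathbb{E}\{Y\}>0$ covers the denominator of $\beta_Y$; the variance $\mathbb{E}\{Y^{2}\}-(\mathbb{E}\{Y\})^{2}$ is non-negative by the Cauchy--Schwarz inequality and strictly positive whenever $Y$ is non-degenerate, which handles the remaining positivity conditions. I would close with a brief remark that the approximation is exact when $Y$ is itself Gamma and is commonly justified in the $\xi_{\mathrm{r}}$-setting of the paper by the fact that sums of products of Nakagami variables have all moments finite, so matching two moments captures the bulk of the distribution accurately for large $N$, which is precisely the regime needed in Corollary \ref{large_number_elements}.
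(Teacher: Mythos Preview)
Your derivation is correct: matching the first two moments of a $\mathrm{Gamma}(\alpha,\beta)$ law to those of $Y$ gives exactly the system $\alpha\beta=\mathbb{E}\{Y\}$, $\alpha\beta^{2}=\mathbb{E}\{Y^{2}\}-(\mathbb{E}\{Y\})^{2}$, and your elimination recovers \eqref{shape_scaling}; the positivity check via Cauchy--Schwarz is the right way to ensure the parameters are admissible. The paper itself does not supply a proof but simply defers to the cited reference, so your self-contained method-of-moments argument is in fact more complete than what appears in the paper; the closing remark tying the approximation quality to the large-$N$ regime of Corollary~\ref{large_number_elements} is a nice touch but not required by the lemma as stated.
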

\begin{proof}
  Please refer to \cite{Tahir} for details.
\end{proof}

For a given sufficiently large $N$, term $\xi_{\mathrm{r}}$ can be regarded as a sum of $N$ independently and identically distributed (i.i.d.) double-Nakagami random variables, i.e., $|f_n| |g_n|$, $\forall n \in {\mathcal{N}}$. As such, we can first express the first and second moments of term $\xi_{\mathrm{r}}$ as \cite{Tahir}
\begin{align}
  \label{xi_r_1st}
  \mathbb{E} \Big\{ \xi_{\mathrm{r}} \Big\} &= \frac{N}{\sqrt{m_{\mathrm{g}} m_{\mathrm{f}}}} \frac{\Gamma \Big( m_{\mathrm{g}} + \displaystyle{\frac{1}{2}} \Big)}{\Gamma \Big( m_{\mathrm{g}} \Big)} \frac{\Gamma \Big( m_{\mathrm{f}} + \displaystyle{\frac{1}{2}} \Big)}{\Gamma \Big( m_{\mathrm{f}} \Big)} {\text{ and}} \\
  \label{xi_r_2rd}
  \mathbb{E} \Big\{ \xi_{\mathrm{r}}^2 \Big\} &= N + \frac{N (N - 1)}{m_{\mathrm{g}} m_{\mathrm{f}}} \frac{\Gamma^2 \Big( m_{\mathrm{g}} + \displaystyle{\frac{1}{2}} \Big)}{\Gamma^2 \Big( m_{\mathrm{g}} \Big)} \frac{\Gamma^2 \Big( m_{\mathrm{f}} + \displaystyle{\frac{1}{2}} \Big)}{\Gamma^2 \Big( m_{\mathrm{f}} \Big)},
\end{align}
respectively. By applying \textbf{Lemma \ref{moment_matching}}, we can approximate the distribution of $\xi_{\mathrm{r}}$ as a Gamma distribution, i.e., $\xi_{\mathrm{r}} \sim \mathrm{Gamma} \big( \alpha_{\xi_{\mathrm{r}}},\beta_{\xi_{\mathrm{r}}} \big)$, where
\begin{align}
  \label{xi_r_shape}
  \alpha_{\xi_{\mathrm{r}}} &= \frac{N \Gamma^2 \Big( m_{\mathrm{g}} + \displaystyle{\frac{1}{2}} \Big) \Gamma^2 \Big( m_{\mathrm{f}} + \displaystyle{\frac{1}{2}} \Big)}{m_{\mathrm{g}} m_{\mathrm{f}} \Gamma^2 \Big( m_{\mathrm{g}} \Big) \Gamma^2 \Big( m_{\mathrm{f}} \Big) - \Gamma^2 \Big( m_{\mathrm{g}} + \displaystyle{\frac{1}{2}} \Big) \Gamma^2 \Big( m_{\mathrm{f}} + \displaystyle{\frac{1}{2}} \Big)} {\text{ and}} \notag\\
  \beta_{\xi_{\mathrm{r}}} &= \frac{m_{\mathrm{g}} m_{\mathrm{f}} \Gamma^2 \Big( m_{\mathrm{g}} \Big) \Gamma^2 \Big( m_{\mathrm{f}} \Big) - \Gamma^2 \Big( m_{\mathrm{g}} + \displaystyle{\frac{1}{2}} \Big) \Gamma^2 \Big( m_{\mathrm{f}} + \displaystyle{\frac{1}{2}} \Big)}{\sqrt{m_{\mathrm{g}} m_{\mathrm{f}}} \Gamma \Big( m_{\mathrm{g}} + \displaystyle{\frac{1}{2}} \Big) \Gamma \Big( m_{\mathrm{f}} + \displaystyle{\frac{1}{2}} \Big) \Gamma \Big( m_{\mathrm{g}} \Big) \Gamma \Big( m_{\mathrm{f}} \Big)},
\end{align}
are the corresponding shape and scaling parameters, respectively, obtained by substituting \eqref{xi_r_1st} and \eqref{xi_r_2rd} into \eqref{shape_scaling}. Then, we yield the $k$-th moment of $\xi_{\mathrm{r}}$, which is given by \cite{Tahir}
\begin{align}
  \label{xi_r_k}
  \mathbb{E} \Big\{ \xi_{\mathrm{r}}^k \Big\} &= \beta_{\xi_{\mathrm{r}}}^k \frac{ \Gamma \Big( \alpha_{\xi_{\mathrm{r}}} + k \Big)}{\Gamma \Big( \alpha_{\xi_{\mathrm{r}}} \Big)} \notag\\
  & = \beta_{\xi_{\mathrm{r}}}^k \alpha_{\xi_{\mathrm{r}}} \big( \alpha_{\xi_{\mathrm{r}}} + 1 \big) \cdots \big( \alpha_{\xi_{\mathrm{r}}} + k - 1 \big).
\end{align}

On the other hand, we note that term $\xi_{\mathrm{d}}$ follows the Nakagami-$m$ distribution, i.e., $\xi_{\mathrm{d}} \sim \mathrm{Nakagami} \big( m_{\mathrm{h}},1 \big)$. As such, the $k$-th moment of term $\xi_{\mathrm{d}}$ can be easily expressed as
\begin{align}
  \label{xi_d_k}
  \mathbb{E} \Big\{ \xi_{\mathrm{d}}^k \Big\} = \Big( \frac{1}{\sqrt{m_{\mathrm{h}}}} \Big)^k \frac{ \Gamma \Big( m_{\mathrm{h}} + \displaystyle{\frac{k}{2}} \Big)}{\Gamma \Big( m_{\mathrm{h}} \Big)}.
\end{align}

Thanks to the above results in \eqref{xi_r_k} and \eqref{xi_d_k}, we have
\begin{align}
  \label{xi_1st}
  {\mathbb{E}} \Big\{ \xi \Big\} &= {\mathbb{E}} \Big\{ \big( {\sqrt{\ell_{\mathrm{h}}} \xi_{\mathrm{d}} + \sqrt{\ell_{\mathrm{f}} \ell_{\mathrm{g}}} \xi_{\mathrm{r}}} \big)^2 \Big\} \notag\\
  &= \ell_{\mathrm{h}} {\mathbb{E}} \Big\{ \xi_{\mathrm{d}}^2 \Big\} + \ell_{\mathrm{f}} \ell_{\mathrm{g}} {\mathbb{E}} \Big\{ \xi_{\mathrm{r}}^2 \Big\} + 2 \sqrt{\ell_{\mathrm{h}} \ell_{\mathrm{f}} \ell_{\mathrm{g}}} {\mathbb{E}} \Big\{ \xi_{\mathrm{d}} \Big\} {\mathbb{E}} \Big\{ \xi_{\mathrm{r}} \Big\}, \\
  \label{xi_2rd}
  {\mathbb{E}} \Big\{ \xi^2 \Big\} &= {\mathbb{E}} \Big\{ \big( {\sqrt{\ell_{\mathrm{h}}} \xi_{\mathrm{d}} + \sqrt{\ell_{\mathrm{f}} \ell_{\mathrm{g}}} \xi_{\mathrm{r}}} \big)^4 \Big\} \notag\\
  &= \ell_{\mathrm{h}}^2 {\mathbb{E}} \Big\{ \xi_{\mathrm{d}}^4 \Big\} + \ell_{\mathrm{f}}^2 \ell_{\mathrm{g}}^2 {\mathbb{E}} \Big\{ \xi_{\mathrm{r}}^4 \Big\} + 6 \ell_{\mathrm{h}} \ell_{\mathrm{f}} \ell_{\mathrm{g}} {\mathbb{E}} \Big\{ \xi_{\mathrm{d}}^2 \Big\} {\mathbb{E}} \Big\{ \xi_{\mathrm{r}}^2 \Big\} \notag\\
  & \ \ \ + 4 \ell_{\mathrm{h}} \sqrt{\ell_{\mathrm{h}} \ell_{\mathrm{f}} \ell_{\mathrm{g}}} {\mathbb{E}} \Big\{ \xi_{\mathrm{d}}^3 \Big\} {\mathbb{E}} \Big\{ \xi_{\mathrm{r}} \Big\} \notag\\
  & \ \ \ + 4 \ell_{\mathrm{f}} \ell_{\mathrm{g}} \sqrt{\ell_{\mathrm{h}} \ell_{\mathrm{f}} \ell_{\mathrm{g}}} {\mathbb{E}} \Big\{ \xi_{\mathrm{d}} \Big\} {\mathbb{E}} \Big\{ \xi_{\mathrm{r}}^3 \Big\}.
\end{align}
By substituting \eqref{xi_1st} into \eqref{minimum_bit_energy_lowpower} and substituting \eqref{xi_1st}--\eqref{xi_2rd} into \eqref{wideband_slope_lowpower}, respectively, we obtain the desired results, which completes the proof of the corollary.

\subsection{Proof of Corollary \ref{discrete_phase_shifts}}

To further describe the minimum bit energy in \eqref{minimum_bit_energy_large_number_elements} and the wideband slope in \eqref{wideband_slope_large_number_elements} for the considered system in the low-power regime, it is necessary to derive the distribution of $\bar \xi = \big( \sqrt{\ell_{\mathrm{h}}} \xi_{\mathrm{d}} + \sqrt{\ell_{\mathrm{f}} \ell_{\mathrm{g}}} \bar \xi_{\mathrm{r}} \big)^2$, especially to derive the one of $\bar \xi_{\mathrm{r}} = \sum\nolimits_{n\in {\mathcal{N}}} \big| f_n \big| \big| g_n \big| {\mathrm{exp}} \big({\jmath{\hat \theta_n}} \big)$. As can be observed, $|f_n|$, $|g_n|$, and ${\mathrm{exp}} \big({\jmath{\hat \theta_n}} \big)$ are independent with each other as well as $\mathbb{E} \big\{ {\mathrm{exp}} \big({\jmath{\hat \theta_n}} \big) \big\} = \mathbb{E} \big\{ {\mathrm{exp}} \big({- \jmath{\hat \theta_n}} \big) \big\} = 2^b/ \pi \sin \big( \pi / 2^b \big)$ \cite{Qingqing2}. As such, for a given sufficiently large $N$, the first and second moments of term $\bar \xi_{\mathrm{r}}$ can be first given by
\begin{align}
  \label{bar_xi_r_1st}
  \mathbb{E} \Big\{ \bar \xi_{\mathrm{r}} \Big\} &= \frac{N}{\sqrt{m_{\mathrm{g}} m_{\mathrm{f}}}} \frac{ \Gamma \Big( m_{\mathrm{g}} + \displaystyle{\frac{1}{2}} \Big)}{\Gamma \Big( m_{\mathrm{g}} \Big)} \frac{ \Gamma \Big( m_{\mathrm{f}} + \displaystyle{\frac{1}{2}} \Big)}{\Gamma \Big( m_{\mathrm{f}} \Big)} \frac{2^b}{\pi} \sin \Big( \frac{\pi}{2^b} \Big) {\text{ and}} \notag\\
  \mathbb{E} \Big\{ \bar \xi_{\mathrm{r}}^2 \Big\} &= N + \frac{N (N - 1)}{m_{\mathrm{g}} m_{\mathrm{f}}} \frac{\Gamma^2 \Big( m_{\mathrm{g}} + \displaystyle{\frac{1}{2}} \Big)}{\Gamma^2 \Big( m_{\mathrm{g}} \Big)} \frac{\Gamma^2 \Big( m_{\mathrm{f}} + \displaystyle{\frac{1}{2}} \Big)}{\Gamma^2 \Big( m_{\mathrm{f}} \Big)} \notag\\
  & \ \ \ \times \Big( \frac{2^b}{\pi} \sin \Big( \frac{\pi}{2^b} \Big) \Big)^2,
\end{align}
respectively. By applying \textbf{Lemma \ref{moment_matching}}, we can approximate the distribution of $\bar \xi_{\mathrm{r}}$ as a Gamma distribution, i.e., $\bar \xi_{\mathrm{r}} \sim \mathrm{Gamma} \big( \alpha_{\bar \xi_{\mathrm{r}}},\beta_{\bar \xi_{\mathrm{r}}} \big)$, where $\alpha_{\bar \xi_{\mathrm{r}}}$ and $\beta_{\bar \xi_{\mathrm{r}}}$ are the corresponding shape and scaling parameters, respectively, obtained by substituting \eqref{bar_xi_r_1st} into \eqref{shape_scaling}. Then, we yield the $k$-th moment of $\bar \xi_{\mathrm{r}}$, which is expressed as \cite{Tahir}
\begin{align}
  \label{bar_xi_r_k}
  \mathbb{E} \Big\{ \bar \xi_{\mathrm{r}}^k \Big\} = \beta_{\bar \xi_{\mathrm{r}}}^k \alpha_{\bar \xi_{\mathrm{r}}} \big( \alpha_{\bar \xi_{\mathrm{r}}} + 1 \big) \cdots \big( \alpha_{\bar \xi_{\mathrm{r}}} + k - 1 \big).
\end{align}
By substituting \eqref{bar_xi_r_k} into \eqref{minimum_bit_energy_large_number_elements} and \eqref{wideband_slope_large_number_elements}, respectively, we have the desired results, which completes the proof of the corollary.

\subsection{Proof of Theorem \ref{wideband}}

The minimum bit energy for the considered system under statistical QoS constraints in the wideband regime is given by
\begin{align}
  \label{min_bit_energy_bandwidth}
  {\frac{E_{\mathrm{b}}}{N_0}}_{\min} &= \lim\limits_{B_{\mathrm{c}} \rightarrow \infty} \frac{1}{C_{\mathrm{E}} \big( B_{\mathrm{c}} \big) } \frac{P}{N_0 N_{\mathrm{c}} B_{\mathrm{c}} } \notag\\
  & = \lim\limits_{B_{\mathrm{c}} \rightarrow \infty} \frac{- \displaystyle{\frac{\mu T P}{N_0 N_{\mathrm{c}} {\mathrm{ln}}2 }}}{{\mathrm{ln}} \Big( \mathbb{E} \Big\{ {\mathrm{exp}} \Big( {- \mu T B_{\mathrm{c}} {\mathrm{log}}_2 \Big( 1 + \displaystyle{\frac{P }{N_0 N_{\mathrm{c}}  B_{\mathrm{c}}}} \xi } \Big) \Big) \Big\} \Big)} \notag\\
  & \overset{(f)} = - \frac{\displaystyle{\frac{\mu T P}{N_0 N_{\mathrm{c}} {\mathrm{ln}}2 }}}{{\mathrm{ln}} \Big( \mathbb{E} \Big\{ {\mathrm{exp}} \Big( {- \displaystyle{\frac{\mu T P}{N_0 N_{\mathrm{c}} {\mathrm{ln}}2 } } \xi} \Big) \Big\} \Big)},
\end{align}
where ($f$) can be obtained by the fact that $\lim \limits_{x \rightarrow \infty} x {\mathrm{ln}} \big( 1 + a/x \big) = a$ for any constant $a > 0$. Also, the first and second derivatives of the normalized effective capacity with respect to $B_{\mathrm{c}}$ in \eqref{normalized_C_E_wideband} are given by \eqref{dot_C_E_B_c} and \eqref{ddot_C_E_B_c} at the top of next page,
\begin{figure*}[ht]
\begin{align}
  \label{dot_C_E_B_c}
  \dot{C}_{\mathrm{E}} \big( B_{\mathrm{c}} \big) &= - \frac{1}{\mu T} {\mathrm{ln}} \bigg( \mathbb{E} \bigg\{ { {\mathrm{exp}} \Big( {- \mu T B_{\mathrm{c}} {\mathrm{log}}_2 \Big( 1 + \displaystyle{\frac{P }{N_0 N_{\mathrm{c}} B_{\mathrm{c}}}} \xi \Big) } \Big) } \bigg\} \bigg) - \mathbb{E} \bigg\{ {\mathrm{exp}} \Big( {- \mu T B_{\mathrm{c}} {\mathrm{log}}_2 \Big( 1 + \displaystyle{\frac{P }{N_0 N_{\mathrm{c}} B_{\mathrm{c}}}} \xi \Big) } \Big) \notag\\
  & \ \ \ \times \bigg( B_{\mathrm{c}} {\mathrm{log}}_2 \Big( 1 + \displaystyle{\frac{P }{N_0 N_{\mathrm{c}} B_{\mathrm{c}}}} \xi \Big) - \displaystyle{\frac{N_{\mathrm{c}} B_{\mathrm{c}} P \xi}{\big(N_0 N_{\mathrm{c}} B_{\mathrm{c}} + P \xi \big) {\mathrm{ln}}2}} \bigg) \bigg\} \bigg( {\mathbb{E} \Big\{ { {\mathrm{exp}} \Big( {- \mu T B_{\mathrm{c}} {\mathrm{log}}_2 \Big( 1 + \displaystyle{\frac{P }{N_0 N_{\mathrm{c}} B_{\mathrm{c}}}} \xi \Big) } \Big) } \Big\}} \bigg)^{-1}. \\
  \label{ddot_C_E_B_c}
  \ddot{C}_{\mathrm{E}} \big( B_{\mathrm{c}} \big) & = \mu T B_{\mathrm{c}} \bigg( \mathbb{E} \bigg\{ {\mathrm{exp}} \Big( {- \mu T B_{\mathrm{c}} {\mathrm{log}}_2 \Big( 1 + \displaystyle{\frac{P }{N_0 N_{\mathrm{c}} B_{\mathrm{c}}}} \xi \Big) } \Big) \bigg( B_{\mathrm{c}} {\mathrm{log}}_2 \Big( 1 + \displaystyle{\frac{P }{N_0 N_{\mathrm{c}} B_{\mathrm{c}}}} \xi \Big) - \displaystyle{\frac{N_{\mathrm{c}} B_{\mathrm{c}} P \xi}{\big(N_0 N_{\mathrm{c}} B_{\mathrm{c}} + P \xi \big) {\mathrm{ln}}2}} \bigg) \bigg\} \bigg)^2  \notag\\
  & \ \ \ \times \bigg( \mathbb{E} \Big\{ { {\mathrm{exp}} \Big( {- \mu T B_{\mathrm{c}} {\mathrm{log}}_2 \Big( 1 + \displaystyle{\frac{P }{N_0 N_{\mathrm{c}} B_{\mathrm{c}}}} \xi \Big) } \Big) } \Big\} \bigg)^{-2} - \bigg( {\mathbb{E} \Big\{ { {\mathrm{exp}} \Big( {- \mu T B_{\mathrm{c}} {\mathrm{log}}_2 \Big( 1 + \displaystyle{\frac{P }{N_0 N_{\mathrm{c}} B_{\mathrm{c}}}} \xi \Big) } \Big) } \Big\}} \bigg)^{-1} \notag\\
  & \ \ \ \times \mathbb{E} \bigg\{ {\mathrm{exp}} \Big( {- \mu T B_{\mathrm{c}} {\mathrm{log}}_2 \Big( 1 + \displaystyle{\frac{P }{N_0 N_{\mathrm{c}} B_{\mathrm{c}}}} \xi \Big) } \Big) \bigg( \mu T B_{\mathrm{c}} \Big( {\mathrm{log}}_2 \Big( 1 + \displaystyle{\frac{P }{N_0 N_{\mathrm{c}} B_{\mathrm{c}}}} \xi \Big) - \displaystyle{\frac{B P \xi}{\big(N_0 N_{\mathrm{c}} B_{\mathrm{c}} + P \xi \big) {\mathrm{ln}}2}} \Big)^2\notag\\
  & \ \ \ + {\mathrm{ln}}2 \Big( \displaystyle{\frac{N_{\mathrm{c}} B_{\mathrm{c}} P \xi}{\big(N_0 N_{\mathrm{c}} B_{\mathrm{c}} + P \xi \big) {\mathrm{ln}}2}} \Big)^2 \bigg) \bigg\}.
\end{align}\hrulefill\vspace*{-2mm}
\end{figure*}
respectively. Note that the first and second derivatives of the normalized effective capacity at $B_{\mathrm{c}} \rightarrow \infty$ are given by
\begin{align}
  \label{dot_C_E_B}
  \lim \limits_{B_{\mathrm{c}} \rightarrow \infty} \dot{C}_{\mathrm{E}} \big( B_{\mathrm{c}} \big) &= - \frac{1}{\mu T} {\mathrm{ln}} \bigg( \mathbb{E} \bigg\{ {\mathrm{exp}} \Big( {- \displaystyle{\frac{\mu T P}{N_0 N_{\mathrm{c}} {\mathrm{ln}}2 } } \xi} \Big) \bigg\} \bigg) {\text{ and}} \notag\\
  \lim \limits_{B_{\mathrm{c}} \rightarrow \infty} \ddot{C}_{\mathrm{E}} \big( B_{\mathrm{c}} \big) &= - \frac{P^2}{N_0^2 {\mathrm{ln}}2} \frac{\mathbb{E} \Big\{ \xi^2 {\mathrm{exp}} \Big( {- \displaystyle{\frac{\mu T P}{N_0 N_{\mathrm{c}} {\mathrm{ln}}2 } } \xi} \Big) \Big\} }{\mathbb{E} \Big\{ { {\mathrm{exp}} \Big( {- \displaystyle{\frac{\mu T P}{N_0 N_{\mathrm{c}} {\mathrm{ln}}2 } } \xi} \Big) } \Big\}},
\end{align}
respectively. By substituting \eqref{dot_C_E_B} into \eqref{wideband_slope}, we can acquire the desired results, which completes the proof of the theorem.

\subsection{Proof of Corollary \ref{wideband_large_number_elements}}

By applying \textbf{Lemma \ref{moment_matching}}, we first approximate the distribution of $\xi$ as a Gamma distribution, i.e., $\xi \sim \mathrm{Gamma} \big( \alpha_{\xi},\beta_{\xi} \big)$, where $\alpha_{\xi}$ and $\beta_{\xi}$ are the corresponding shape and scaling parameters, respectively, obtained by substituting \eqref{xi_1st} and \eqref{xi_2rd} into \eqref{shape_scaling}. Then, we express the probability density function (PDF) of $\xi$ as
\begin{align}
  \label{pdf_xi}
  f \big( \xi \big) = \frac{1}{\beta_{\xi}^{\alpha_{\xi}} \Gamma \big( \alpha_{\xi} \big)} {\xi}^{\alpha_{\xi} - 1} \exp \Big( {- \frac{\xi}{\beta_{\xi}}} \Big).
\end{align}
According to \eqref{pdf_xi}, we have \eqref{E_exp_xi} and \eqref{E_xi_exp_xi} at the top of next page,
\begin{figure*}[ht]
\begin{align}
  \label{E_exp_xi}
  \mathbb{E} \Big\{ {\mathrm{exp}} \Big( {- \displaystyle{\frac{\mu T P}{N_0 N_{\mathrm{c}} {\mathrm{ln}}2 } } \xi} \Big) \Big\} & = \int \nolimits_{0}^{\infty} {\mathrm{exp}} \Big( {- \displaystyle{\frac{\mu T P}{N_0 N_{\mathrm{c}} {\mathrm{ln}}2 } } \xi} \Big) f \big( \xi \big) \ {\mathrm{d}} \xi = \frac{1}{\beta_{\xi}^{\alpha_{\xi}} \Gamma \big( \alpha_{\xi} \big)} \int \nolimits_{0}^{\infty} {\xi}^{\alpha_{\xi} - 1} {\mathrm{exp}} \Big( {- \Big( {\displaystyle{\frac{\mu T P}{N_0 N_{\mathrm{c}} {\mathrm{ln}}2}}} + \frac{1}{\beta_{\xi}} \Big) \xi } \Big) \ {\mathrm{d}} \xi \notag\\
  & \overset{(l)} = \frac{1}{\beta_{\xi}^{\alpha_{\xi}} \Gamma \big( \alpha_{\xi} \big)} \Big( {\displaystyle{\frac{\mu T P}{N_0 N_{\mathrm{c}} {\mathrm{ln}}2}}} + \frac{1}{\beta_{\xi}} \Big)^{- \alpha_{\xi}} \Gamma \big( \alpha_{\xi} \big) = \Big( {\displaystyle{\frac{\mu T P \beta_{\xi}}{N_0 N_{\mathrm{c}} {\mathrm{ln}}2}} + 1} \Big)^{- \alpha_{\xi}}. \\
  \label{E_xi_exp_xi}
  \mathbb{E} \Big\{ \xi^2 {\mathrm{exp}} \Big( {- \displaystyle{\frac{\mu T P}{N_0 N_{\mathrm{c}} {\mathrm{ln}}2 } } \xi} \Big) \Big\} & = \int \nolimits_{0}^{\infty} \xi^2 {\mathrm{exp}} \Big( {- \displaystyle{\frac{\mu T P}{N_0 N_{\mathrm{c}} {\mathrm{ln}}2 } } \xi} \Big) f \big( \xi \big) \ {\mathrm{d}} \xi = \frac{1}{\beta_{\xi}^{\alpha_{\xi}} \Gamma \big( \alpha_{\xi} \big)} \int \nolimits_{0}^{\infty} {\xi}^{(\alpha_{\xi} + 2) - 1} {\mathrm{exp}} \Big( {- \Big( {\displaystyle{\frac{\mu T P}{N_0 N_{\mathrm{c}} {\mathrm{ln}}2}}} + \frac{1}{\beta_{\xi}} \Big) \xi } \Big) \ {\mathrm{d}} \xi \notag\\
  & \overset{(m)} = \frac{1}{\beta_{\xi}^{\alpha_{\xi}} \Gamma \big( \alpha_{\xi} \big)} \Big( {\displaystyle{\frac{\mu T P}{N_0 N_{\mathrm{c}} {\mathrm{ln}}2}}} + \frac{1}{\beta_{\xi}} \Big)^{- (\alpha_{\xi} + 2)} \Gamma \big( \alpha_{\xi} + 2 \big) = \alpha_{\xi} \big( \alpha_{\xi} + 1 \big) \beta_{\xi}^2 \Big( {\displaystyle{\frac{\mu T P \beta_{\xi}}{N_0 N_{\mathrm{c}} {\mathrm{ln}}2}} + 1} \Big)^{- (\alpha_{\xi} + 2)}.
\end{align}\hrulefill\vspace*{-4mm}
\end{figure*}
where ($l$) and ($m$) can be obtained by the fact that $\int \nolimits_{0}^{\infty} x^{a - 1} {\mathrm{exp}} \big( {- b x } \big) {\mathrm{d}} x = b^{- a} \Gamma \big( a \big)$ for any constants $a > 0$ and $b > 0$ \cite[Eq. 3.381.4]{Gradshteyn}. By substituting \eqref{E_exp_xi} into \eqref{minimum_bit_energy_wideband} and substituting \eqref{E_exp_xi}--\eqref{E_xi_exp_xi} into \eqref{wideband_slope_wideband}, respectively, we acquire the desired results, which completes the proof of the corollary.

\end{appendices}

\bibliographystyle{IEEEtran}
\bibliography{references}

% Generated by IEEEtran.bst, version: 1.13 (2008/09/30)
\begin{thebibliography}{10}
\providecommand{\url}[1]{#1}
\csname url@samestyle\endcsname
\providecommand{\newblock}{\relax}
\providecommand{\bibinfo}[2]{#2}
\providecommand{\BIBentrySTDinterwordspacing}{\spaceskip=0pt\relax}
\providecommand{\BIBentryALTinterwordstretchfactor}{4}
\providecommand{\BIBentryALTinterwordspacing}{\spaceskip=\fontdimen2\font plus
\BIBentryALTinterwordstretchfactor\fontdimen3\font minus
  \fontdimen4\font\relax}
\providecommand{\BIBforeignlanguage}[2]{{%
\expandafter\ifx\csname l@#1\endcsname\relax
\typeout{** WARNING: IEEEtran.bst: No hyphenation pattern has been}%
\typeout{** loaded for the language `#1'. Using the pattern for}%
\typeout{** the default language instead.}%
\else
\language=\csname l@#1\endcsname
\fi
#2}}
\providecommand{\BIBdecl}{\relax}
\BIBdecl

\bibitem{Mingzhe}
W.~Saad, M.~Bennis, and M.~Chen, ``A vision of {6G} wireless systems:
  Applications, trends, technologies, and open research problems,'' \emph{IEEE
  Netw.}, vol.~34, no.~3, pp. 134--142, May 2020.

\bibitem{Wesemann}
S.~Wesemann, J.~Du, and H.~Viswanathan, ``Energy efficient extreme {MIMO}:
  Design goals and directions,'' \emph{IEEE Commun. Mag.}, vol.~61, no.~10, pp.
  132--138, Oct. 2023.

\bibitem{Zhihan}
Z.~Lv, L.~Qiao, and R.~Nowak, ``Energy-efficient resource allocation of
  wireless energy transfer for the internet of everything in digital twins,''
  \emph{IEEE Commun. Mag.}, vol.~60, no.~8, pp. 68--73, Aug. 2022.

\bibitem{Khowaja}
S.~A. Khowaja, K.~Dev, P.~Khowaja, and P.~Bellavista, ``Toward energy-efficient
  distributed federated learning for {6G} networks,'' \emph{IEEE Wireless
  Commun.}, vol.~28, no.~6, pp. 34--40, Dec. 2021.

\bibitem{Abbou}
A.~N. Abbou, T.~Taleb, and J.~Song, ``A software-defined queuing framework for
  {QoS} provisioning in {5G} and beyond mobile systems,'' \emph{IEEE Netw.},
  vol.~35, no.~2, pp. 168--173, Mar. 2021.

\bibitem{Changyang}
C.~She and C.~Yang, ``Energy efficiency and delay in wireless systems: Is their
  relation always a tradeoff?'' \emph{IEEE Trans. Wireless Commun.}, vol.~15,
  no.~11, pp. 7215--7228, Nov. 2016.

\bibitem{Wu2}
Q.~Wu and R.~Zhang, ``Towards smart and reconfigurable environment: Intelligent
  reflecting surface aided wireless network,'' \emph{IEEE Commun. Mag.},
  vol.~58, no.~1, pp. 106--112, Jan. 2020.

\bibitem{Gong}
S.~Gong, X.~Lu, D.~T. Hoang, D.~Niyato, L.~Shu, D.~I. Kim, and Y.-C. Liang,
  ``Toward smart wireless communications via intelligent reflecting surfaces: A
  contemporary survey,'' \emph{IEEE Commun. Surveys Tuts.}, vol.~22, no.~4, pp.
  2283--2314, 4th Quart. 2020.

\bibitem{Wu}
Q.~Wu and R.~Zhang, ``Intelligent reflecting surface enhanced wireless network
  via joint active and passive beamforming,'' \emph{IEEE Trans. Wireless
  Commun.}, vol.~18, no.~11, pp. 5394--5409, Nov. 2019.

\bibitem{George}
C.~Huang, A.~Zappone, G.~C. Alexandropoulos, M.~Debbah, and C.~Yuen,
  ``Reconfigurable intelligent surfaces for energy efficiency in wireless
  communication,'' \emph{IEEE Trans. Wireless Commun.}, vol.~18, no.~8, pp.
  4157--4170, Aug. 2019.

\bibitem{Chenwu}
C.~Zhang, H.~Lu, and C.~W. Chen, ``Reconfigurable intelligent surfaces-enhanced
  uplink user-centric networks on energy efficiency optimization,'' \emph{IEEE
  Trans. Wireless Commun.}, vol.~22, no.~12, pp. 9013--9028, Dec. 2023.

\bibitem{Rose}
Y.~Wu, F.~Zhou, W.~Wu, Q.~Wu, R.~Q. Hu, and K.-K. Wong, ``Multi-objective
  optimization for spectrum and energy efficiency tradeoff in {IRS}-assisted
  {CRNs} with {NOMA},'' \emph{IEEE Trans. Wireless Commun.}, vol.~21, no.~8,
  pp. 6627--6642, Aug. 2022.

\bibitem{Negi}
D.~Wu and R.~Negi, ``Effective capacity: a wireless link model for support of
  quality of service,'' \emph{IEEE Trans. Wireless Commun.}, vol.~2, no.~4, pp.
  630--643, Jul. 2003.

\bibitem{Amiri}
A.~Amiri, P.~Paymard, P.~Andres-Maldonado, S.~Paris, K.~I. Pedersen, and
  T.~Kolding, ``Application awareness for extended reality services:
  {5G}-advanced and beyond,'' \emph{IEEE Commun. Mag.}, vol.~62, no.~8, pp.
  38--44, Aug. 2024.

\bibitem{Hsiao}
X.~Zhang, J.~Tang, H.-H. Chen, S.~Ci, and M.~Guizani, ``Cross-layer-based
  modeling for quality of service guarantees in mobile wireless networks,''
  \emph{IEEE Commun. Mag.}, vol.~44, no.~1, pp. 100--106, Jan. 2006.

\bibitem{Basharat}
S.~Basharat, S.~A. Hassan, H.~Jung, A.~Mahmood, Z.~Ding, and M.~Gidlund, ``On
  the statistical channel distribution and effective capacity analysis of
  {STAR-RIS}-assisted {BAC-NOMA} systems,'' \emph{IEEE Trans. Wireless
  Commun.}, vol.~23, no.~5, pp. 4675--4690, May 2024.

\bibitem{Vincent}
X.~Zhang, J.~Wang, and H.~Vincent~Poor, ``Statistical {QoS}-driven beamforming
  and trajectory optimizations in {UAV/IRS}-based {6G} wireless networks in the
  non-asymptotic regime,'' in \emph{Proc. IEEE Intern. Sympos. on Inf. Theory},
  2022, pp. 3333--3338.

\bibitem{Moosavi}
N.~Moosavi, A.~Zappone, P.~Azmi, and M.~Sinaie, ``Delay-aware and
  energy-efficient resource allocation for reconfigurable intelligent
  surfaces,'' \emph{IEEE Commun. Lett.}, vol.~27, no.~2, pp. 605--609, Feb.
  2023.

\bibitem{Verdu}
S.~Verdu, ``Spectral efficiency in the wideband regime,'' \emph{IEEE Trans.
  Inf. Theory}, vol.~48, no.~6, pp. 1319--1343, Jun. 2002.

\bibitem{Mustafa}
M.~C. Gursoy, D.~Qiao, and S.~Velipasalar, ``Analysis of energy efficiency in
  fading channels under {QoS} constraints,'' \emph{IEEE Trans. Wireless
  Commun.}, vol.~8, no.~8, pp. 4252--4263, Aug. 2009.

\bibitem{Gursoy}
M.~C. Gursoy, ``{MIMO} wireless communications under statistical queueing
  constraints,'' \emph{IEEE Trans. Inf. Theory}, vol.~57, no.~9, pp.
  5897--5917, Sep. 2011.

\bibitem{Qiao}
D.~Qiao, ``The impact of statistical delay constraints on the energy efficiency
  in fading channels,'' \emph{IEEE Trans. Wireless Commun.}, vol.~15, no.~2,
  pp. 994--1007, Feb. 2016.

\bibitem{Zheng}
B.~Zheng, C.~You, and R.~Zhang, ``Intelligent reflecting surface assisted
  multi-user {OFDMA}: Channel estimation and training design,'' \emph{IEEE
  Trans. Wireless Commun.}, vol.~19, no.~12, pp. 8315--8329, Dec. 2020.

\bibitem{Cui}
Z.~Wang, L.~Liu, and S.~Cui, ``Channel estimation for intelligent reflecting
  surface assisted multiuser communications: Framework, algorithms, and
  analysis,'' \emph{IEEE Trans. Wireless Commun.}, vol.~19, no.~10, pp.
  6607--6620, Oct. 2020.

\bibitem{Beixiong}
C.~You, B.~Zheng, and R.~Zhang, ``Channel estimation and passive beamforming
  for intelligent reflecting surface: Discrete phase shift and progressive
  refinement,'' \emph{IEEE J. Select. Areas Commun.}, vol.~38, no.~11, pp.
  2604--2620, Nov. 2020.

\bibitem{Senem}
D.~Qiao, M.~C. Gursoy, and S.~Velipasalar, ``Energy efficiency in the low-{SNR}
  regime under queueing constraints and channel uncertainty,'' \emph{IEEE
  Trans. Commun.}, vol.~59, no.~7, pp. 2006--2017, Jul. 2011.

\bibitem{Cenk}
S.~Akin and M.~C. Gursoy, ``Effective capacity analysis of cognitive radio
  channels for quality of service provisioning,'' \emph{IEEE Trans. Wireless
  Commun.}, vol.~9, no.~11, pp. 3354--3364, Nov. 2010.

\bibitem{Velipasalar}
D.~Qiao, M.~C. Gursoy, and S.~Velipasalar, ``The impact of {QoS} constraints on
  the energy efficiency of fixed-rate wireless transmissions,'' \emph{IEEE
  Trans. Wireless Commun.}, vol.~8, no.~12, pp. 5957--5969, Dec. 2009.

\bibitem{Tulino}
A.~Lozano, A.~Tulino, and S.~Verdu, ``Multiple-antenna capacity in the
  low-power regime,'' \emph{IEEE Trans. Inf. Theory}, vol.~49, no.~10, pp.
  2527--2544, Oct. 2003.

\bibitem{Poor}
M.~Gursoy, H.~Poor, and S.~Verdu, ``Noncoherent rician fading channel-part
  {II}: spectral efficiency in the low-power regime,'' \emph{IEEE Trans.
  Wireless Commun.}, vol.~4, no.~5, pp. 2207--2221, Sep. 2005.

\bibitem{Qingqing2}
Q.~Wu and R.~Zhang, ``Beamforming optimization for wireless network aided by
  intelligent reflecting surface with discrete phase shifts,'' \emph{IEEE
  Trans. Commun.}, vol.~68, no.~3, pp. 1838--1851, Mar. 2020.

\bibitem{Raghavan}
V.~Raghavan, G.~Hariharan, and A.~M. Sayeed, ``Capacity of sparse multipath
  channels in the ultra-wideband regime,'' \emph{IEEE J. Sel. Topics Signal
  Process.}, vol.~1, no.~3, pp. 357--371, Oct. 2007.

\bibitem{Telatar}
I.~Telatar and D.~Tse, ``Capacity and mutual information of wideband multipath
  fading channels,'' \emph{IEEE Trans. Inf. Theory}, vol.~46, no.~4, pp.
  1384--1400, Jul. 2000.

\bibitem{Porrat}
D.~Porrat, D.~N.~C. Tse, and S.~Nacu, ``Channel uncertainty in ultra-wideband
  communication systems,'' \emph{IEEE Trans. Inf. Theory}, vol.~53, no.~1, pp.
  194--208, Jan. 2007.

\bibitem{Tahir}
B.~Tahir, S.~Schwarz, and M.~Rupp, ``Analysis of uplink {IRS}-assisted {NOMA}
  under {N}akagami-m fading via moments matching,'' \emph{IEEE Wireless Commun.
  Lett.}, vol.~10, no.~3, pp. 624--628, Mar. 2021.

\bibitem{Gradshteyn}
I.~S. Gradshteyn and I.~M. Ryzhik, \emph{Table of Integrals, Series, and
  Products}.\hskip 1em plus 0.5em minus 0.4em\relax Academic Press, 2007.

\end{thebibliography}
\end{document}